\newtheorem{prop}{Proposition}
\newtheorem{lemma}{Lemma}
\newtheorem{corollary}{Corollary}
\newtheorem{definition}{Definition}
\newtheorem{theo}{Theorem}
\newtheorem{conj}[theo]{Conjecture}
\newtheorem{quest}[theo]{Question}
\title{Bipartite peak-pit domains}
\date{}
\author[1,2]{Alexander Karpov\footnote{Authors are  listed in alphabetical order}\thanks{Correspondent author: akarpov@hse.ru}} 
\author[3]{Klas Markstr{\"o}m} 
\author[4]{S{\o}ren Riis}
\author[4]{Bei Zhou}
\affil[1]{HSE University, Moscow, Russia}
\affil[2]{Institute of Control Sciences, Russian Academy of Sciences, Moscow, Russia}
\affil[3]{Ume\aa\ University} 
\affil[4]{Queen Mary University of London}
\begin{document}
\maketitle

\begin{abstract}
In this paper, we introduce the class of bipartite peak-pit domains. This is a class of Condorcet domains which include both the classical single-peaked and single-dipped domains. Our class of domains can be used to model situations where some alternatives are ranked based on a most preferred location on a societal axis, and some are ranked based on a least preferred location. This makes it possible to model situations where agents have different rationales for their ranking depending on which of two subclasses of the alternatives one is considering belong to.  The class of bipartite peak-pit domains includes most peak-pit domains for $n\leq 7$ alternatives, and the largest Condorcet domains for each $n\leq 8$.

In order to study the maximum possible size of a bipartite peak-pit domain we introduce set-alternating schemes. This is a method for constructing well-structured peak-pit domains which are copious and connected.  We show that domains based on these schemes always have size at least $2^{n-1}$ and some of them have sizes larger than the domains of Fishburn's alternating scheme. We show that the maximum domain size for sufficiently high $n$ exceeds $2.1973^n$. This improves the previous lower bound for peak-pit domains $2.1890^n$ from \cite{karpov2023constructing}, which was also the highest asymptotic lower bound for the size of the largest Condorcet domains. 
\end{abstract}

\section{Introduction}
The simplest non-trivial  model for the possible rankings of a collection of alternatives is perhaps the single-peaked domain. This domain was introduced formally in \cite{black} and corresponds to a society in which all voters have a common political axis, each voter has a  preferred position on this axis and ranks alternatives according to how close to this preferred position they are. \cite{black} showed that on any domain of this type majority voting will select a unique winner, and in fact produce a transitive ranking of all  alternatives.  In terms of modelling the behaviour of a group of voters the single-peaked domain provides a sensible first approximation of, for example, a society where opinions correlate with a left-right, or a liberal-conservative axis.  It is however clearly not a universal model. A second distinct model is given by the single-dipped domains, in which voters instead have a least preferred position on a common axis and rank alternatives higher when they are further away from this position.  Up to relabeling of alternatives there is  exactly one maximal single-peaked domain for $n$ alternatives, known as Black's single-peaked domain, and one maximal singe-dipped domain. Both have size $2^{n-1}$, where $n$ is the number of alternatives, and are in fact each other's duals, i.e. each order in one appears reversed in the other.     

In real life situations the collection of opinions is often not strictly determined by a common political axis and  modeling requires more flexible options than Black's single-peaked domain, and the single-dipped domain.  \cite{arrow63} introduced one such class by requiring that the restriction to any triple of alternatives must be single-peaked in Black's sense, but letting go of the common political  axis. These domains are now known as Arrow's single-peaked domains  and have been studied in great detail. \cite{slinko2019condorcet} showed that maximal domains of this type have size $2^{n-1}$, just as Black's, and they are minimally rich, meaning that every alternative is ranked first by some order.   While this class generalises the single-peaked domain greatly it does not include the single-dipped domain.   In the theory of computational social choice an alternative approach to expanding the set of domains has been to investigate if a domain is in fact, for a suitable distance measure, close to being single-peaked. When this is the case one then can attempt to exploit the nearby single-peaked domain to simplify computational problems which are known to have efficient algorithms on single-peaked domains. Examples of this approach can be found in \citep{Bredereck}.

One common generalisation of the single-peaked and single-dipped domains is given by the peak-pit domains. These domains follow the set-up given by Arrow and require the restriction to each triple of alternatives to be either single-peaked or single-dipped in Black's sense.  The class of peak-pit domains  is much larger and structurally diverse than the  Arrow's single-peaked domains, giving the modeller the ability to match a wider range of groups or societies. However, this diversity also means that general peak-pit domains do not have the unified social interpretation that the single-peaked domains have. Peak-pit domains have been studied in numerous papers, see e.g \cite{DKK:2012,Li2021},  and within the class one finds interesting trade-offs between different forms of diversity.   It is easy to show that if a peak-pit domain is minimally rich then it is an Arrow's single-peaked domain. However, if we consider domains which are not minimally rich  we find  examples which are much larger than Arrow's domains,  thereby giving more diversity in terms of the total number of distinct opinions.  The first such examples were found by \cite{fishburn1996acyclic}, and were only larger than Arrow's domains by a linear factor, but by now several constructions \cite{fishburn2002acyclic,karpov2023constructing} have given exponentially larger domains. Most such examples have been relatively complicated to describe and analyse.

Another common generalisation has been given in terms of domains which are unions of single-peaked and single-dipped domains. In these domains each voter either has single-dipped or single-peaked preferences, and the domain may not be a Condorcet domain.  For the union of the maximal single-peaked and single-dipped domains \cite{achuthankutty2018dictatorship,BERGA200039} showed that the conclusion of the Gibbard-Satterwhaite theorem remains valid, saying that strategy-proof voting rules must be dictatorial. Later works \cite{feigenbaum2015strategyproof,ALCALDEUNZU201821,alcaldeunzu2023strategyproofness} have shown that by adding additional public information about the voters one can derive strategy-proof non-dictatorial decision rules for domains in this class. The main limitation of these models is that each voter must belong to one of two quite restrictive classes.  Another generalisation is given in  \cite{Yang} which consider domains which are the union of several single-peaked domains, each with a separate axis. Here the case of two distinct axes was found to remain well-behaved for several computational social choice problems. 

In this paper we will introduce several sub-classes of the peak-pit domains which mix the structures of  single-peaked and single-dipped domains in a new well-structured way. Instead of dividing the voters into two classes we divide the alternatives into two classes, which informally can be seen as attractive and repulsive alternatives respectively. The most general of our classes are the \emph{bipartite} peak-pit domains. A peak-pit domain is  bipartite if  the set of alternatives can partitioned into two parts, such that on the first part the domain is single-peaked and on the second part the domain is single-dipped.  This type of domain captures situations where voters have different rationales for their ranking of the two classes of alternatives. Within the first class they have a most preferred location on a societal axis, and within the second class they have a least preferred location. Here we impose no additional restrictions on the ranking of pairs of alternatives from both classes. As we shall show, most peak-pit domains on $n\leq 7$ alternatives are bipartite, as are the largest peak-pit domains on $n\leq 8$ alternatives. 

Next we introduce a subclass of the bipartite peak-pit domains called \emph{midpoint bipartite} peak-pit domains. Here alternatives from the two parts are intermixed by saying that the restriction to a triple whose midpoint belongs to the first of the two parts, shall be single-peaked, and vice versa.  Equivalently,  we can think of these domains as describing a society in which voters rank triples in a single-peaked or a single-dipped way, and they have one set of local midpoints which they prefer to give high local rank and one set of local midpoints which they prefer to give low local rank.  Up to $n=7$ all maximum Condorcet domains belong to this class, but most peak-pit domains do not.

Finally, in order to simplify the structure of our domains further, and allow us to give a lower bound for how large bipartite peak-pit domains can become,  we restrict our focus to domains with just two distinct never conditions, rather than the full set allowed for peak-pit domains. These are the \emph{set-alternating} domains. As we will show, this class contains domains larger than any previously known, both when seen as peak-pit domains and more generally as Condorcet domains. 

\subsection{Related literature}
The search for large well-behaved domains has a long history. \cite{AJ84} showed that there are Condorcet domains, i.e. domains on which pairwise majority voting leads to a transitive ranking of the candidates, which are larger than $2^{n-1}$, the size of Black's single-peaked domain, and asked how large such domains can be.  This question was reiterated in the survey  \citet{kim1992overview}, which listed  several fundamental unsolved problems in mathematical social science. The first of which is "What is the largest size of a set of linear preference orders for $n$ alternatives such that majority voting is transitive when each voter chooses his preferences from this set?". More recent surveys can be found in \citep{elkind2022,karpov_a,puppe2023maximal}).   

\cite{Johnson78}, and later  \cite{Craven1992}, conjectured that a Condorcet domain of maximum size contains at most $2^{n-1}$  linear orders. Here $2^{n-1}$ is the maximum cardinality in  several well-known domain classes, such as the single-peaked, single-dipped, and group-separable domains. After Craven's paper, a new counterexample on five alternatives to the conjecture was given in \citet{PF:1992}. The examples in \cite{AJ84} were not well known at the time.  \cite{fishburn1996acyclic} generalized the new counterexample by introducing an alternating scheme resulting in what is now called Fishburn's domains. These domains mix never-top and  never-bottom triples and has a size of order $n\times2^n$ \citep{GR:2008}.

\cite{fishburn1996acyclic} conjectured that among Condorcet domains such that for any triple of alternatives do not satisfy a never-middle condition the alternating scheme provides domains of maximum cardinality.

The conjecture posed by Fishburn is true for three to five alternatives \citep{fishburn1996acyclic}, and six alternatives \cite{fishburn2002acyclic}. In \citep{GR:2008} it was claimed to be true for seven alternatives, though without including a proof. This was later proven independently in \cite{n7paper}.  For eight alternatives there is a counterexample to Fishburn's conjecture \citep{8Alternatives}.

For a sufficiently large $n$ the lower bound on the size of maximum Condorcet domains has been increased to $2.1708^n$ \citep{fishburn1996acyclic} and later to  $2.1890^n$ \cite{karpov2023constructing}.

In this paper we present a sequence of Condorcet domains that for a sufficiently large $n$ have a size exceeding $2.1973^n$, improving the existing lower bounds. The value of this result lies not only in improving the lower bound but also in presenting a well-structured domain that does not depend on recursive constructions. Previous cited lower bounds, were based on the replacement scheme that was iteratively applied, starting with a carefully chosen domain on a small number of alternatives.

\subsection{Overview of the paper}
The structure of the paper is as follows. Section \ref{sec:prelim} contains notation and definitions used in the paper. We present Bipartite peak-pit domains in Section \ref{sec:bipar}. Section \ref{sec:set_alter} introduces set-alternating schemes and discusses the structure of the domains these generate. Section \ref{sec:asymp} contains the analysis of the asymptotic growth rate of some of the largest domains in this class.  We conclude in section \ref{sec:disc} with some additional comments and open problems.

\section{Preliminaries}
\label{sec:prelim}

Let a finite set  $X=[n]=\{1, \dots, n\}$ be the set of alternatives. Let $L(X)$ be the set of all linear orders over $X$.  Each agent $i \in N$ has a preference order $P_i$ over $X$ (each preference order is a linear order). For brevity, we will write preference orders as  strings, e.g. $12 \dots n$ means $1$ is the best alternative, $n$ is the worst.

A subset of preference orders $D\subseteq L(X)$ is called a \emph{domain} of preference orders. A domain $D$ is a \emph{Condorcet domain} if whenever the preferences of all agents belong to the domain, the majority relation of any preference profile with an odd number of agents is transitive. A Condorcet domain $D$ is \emph{maximal} if every Condorcet domain $D'\supseteq D$ (on the same set of alternatives) coincides with $D$. In our paper we are focused on maximal Condorcet domains	unless otherwise specified.

The \emph{dual} of a partial order $R_1$ is the partial order $R_2$ for which $x R_2 y$ if and only if $y R_1 x$.  The dual of a domain $D$ is the domain consisting of the dual of each order in $D$.

A \emph{societal axis} for a domain is a designated linear order on the set of alternatives. This is usually assumed to correspond to some organising principle for the alternatives, e.g. a left-right political scale.  The societal axis is not necessarily an element of the domain, e.g. for some domains which are not maximal domains. 

\cite{Sen1966} proved that  a domain is a Condorcet domain if the restriction of the domain to any triple of alternatives $(a,b,c)$ satisfies a never condition. A never condition can be of three forms $xNb$, $xNm$ $xNt$, referred to as a never bottom, a never middle, and a never top condition respectively. Here $x$ is an alternative from the   triple and  $xnb$, $xNm$, and $xNt$ means that $x$ is not ranked last, second, or  first respectively in the restricted domain. Fishburn noted that for domains with a societal axis never conditions can instead be described as  $iNj$, $i,j\in [3]$. $iNj$ means that $i^{th}$ alternative from the triple according to societal axis does not fill in $j^{th}$ place within this triple in each order from the domain. For example restriction $abc$ to triple $a,b,c\in[n]$, $a<b<c$ satisfies never conditions $1N2,1N3,2N1,2N3,3N1,3N2$, but violates never conditions $1N1,2N2,3N3$.

A Condorcet domain $D$  is \emph{connected} if, given any two orders from the domain, the second order can be obtained from the first  by a sequence of transpositions of neighbouring alternatives such that all orders generated by this sequence belong to the domain. A Condorcet domain has \emph{maximal width} if it contains a pair of completely reversed linear orders. A Condorcet domain $D$ is \emph{unitary} if it contains order $123\ldots n$.

Following \cite{slinko2019condorcet} we say that a Condorcet domain is \emph{copious} if the restriction to any triple has size 4, or equivalently every triple, satisfies a single never condition. A domain is \emph{ample} if the restriction to any pair has size two. Every copious Condorcet domain is ample. 

A domain which satisfies a never condition of the form $xN3$ for every triple is called \emph{Arrow's single-peaked domain}.

A domain $D$ is a \emph{peak-pit} domain if, for each triple of alternatives, the restriction of the domain to this triple is either single-peaked ($iN3$  restriction), or single-dipped ($iN1$ restriction).

A domain $D$ is called a \emph{Fishburn domain} if it satisfies the {\em alternating scheme} \cite{fishburn1996acyclic}: there exists a linear ordering of alternatives $a_1, \ldots, a_m$ such that for all $i, j, k$ with $1\le i<j<k\le m$ the restriction of the domain to the set $\{a_i,a_j,a_k\}$ is single-peaked with axis $a_ia_ja_k$ if $j$ if is even (odd), and it is single-dipped with axis $a_ia_ja_k$ if $j$ is odd (even). Note that there is either one or two Fishburn domains depending on the parity of $n$. The dual of a Fishburn domain is also a Fishburn domain.

Recall that the asymptotic notation $f(n)\in \Omega(g(n))$ means that $\lim_{n\rightarrow \infty}{\frac{f(n)}{g(n)}}>0$,

\cite{fishburn1996acyclic} introduced the function
\[
f(n)=\max \{|D|: \text{$D$ is a Condorcet domain on a set of $n$ alternatives}\}.
\]
Similarly \cite{karpov2023constructing} introduced the function
\[h(n)=\max \{|D| : D\text{ is a peak-pit Condorcet domain on a set of $n$ alternatives}\}.
\]
and  also showed that $f(n)\in \Omega(2.1890^n)$  and $h(n)\in \Omega(2.1045^n)$.

\begin{definition}
Let $D_1$ and $D_2$ be two domains of equal cardinality on sets of alternatives $X_1$ and $X_2$, respectively. 
We say that domains $D_1$ and $D_2$ are {\em isomorphic} if there are a bijection $\psi\colon X_1\to X_2$ and a bijection $\sigma \colon D_1\to D_2$  such that for each $x \in D_1$  we have $\sigma(x)=x^{\psi}$, where $x^{\psi}$ is an order with permuted alternatives according to $\psi$. If $D_1$ and $D_2$ are isomorphic, we write $D_1\cong D_2$.
\end{definition}

Every domain $D$ is associated with a graph $G_D$ \citep{puppe2019condorcet}. The set of linear orders from $D$ is the set of vertices  $V_D$ and for two orders $u,w\in D$ we draw an edge between them if $D$ does not contain order $x$ that is between $u,w$ in terms of Kemeny betweeness (which means that $x$ agrees with all binary comparisons on which $u,v$ agrees). \cite{puppe2019condorcet} proved that for each Condorcet domain $D$ the graph $G_D$  is a median graph (as defined by \citep{mulder}). If Condorcet domain $D$ is connected, then in the associated domain each edge represents one swap of neighbouring alternatives. 

For set complements we  will use the notation  $\overline{A}=[n]\setminus A$, where $A\subseteq [n]$. 

\section{Bipartite peak-pit domains}
\label{sec:bipar}

We will start by formalising some of the domain types described in the introduction.

\begin{definition}
    A peak-pit domain  $D$ is a \emph{bipartite peak-pit domain} if there exists a subset $A$ of the alternatives such that the restriction of $D$ to $A$ is Arrow's single-peaked and the restriction of $D$ to $\bar{A}$ is the dual of an Arrow's single-peaked domain. 
\end{definition}
The class of bipartite peak-pit domains is closed under isomorphism so every domain of type is isomorphic to one where $A=\{1,2,\ldots,k\}$, $\bar{A}=\{k+1,\ldots,n\}$, for some $k$.

For $n<6$ all peak-pit domains are trivially bipartite, since, if $D$ is not single-dipped, we can take $A$ to be any triple which has a never bottom condition, and then $\bar{A}$ will have size less than 3 and hence  define a single-dipped domain in a trivial way.  However, for $n\geq 6$ this is a non-trivial sub-class of the peak-pit domains.  For $n=6$ there are 9939 maximal peak-pit domains \cite{n7paper} and a computational search shows that 124 of these are not bipartite. In Table \ref{fig:nb} we display the non-bipartite maximal peak-pit domain with the smallest number of orders.  For $n=7$ there are 1465680 peak-pit domains, and of those 34393 are not bipartite.
This is a slight increase in the proportion of non-bipartite domains from $n=6$, but only to about 2\% of all peak-pit domains. For very large $n$ we expect most peak-pit domains to not be bipartite.

The maximum Condorcet domain for $n=8$, of size 224, found in \cite{8Alternatives} is a bipartite peak-pit domain, with $A$ of size 4. In fact, up to $n=8$ all maximum Condorcet domains belong to this class.

\begin{table}[h]
\centering
\begin{tabular}{*{19}{c}}
\toprule
1 & 1 & 1 & 1 & 1 & 1 & 1 & 1 & 1 & 1 & 1 & 6 & 6 & 6 & 6 & 6 & 6 & 6 & 6 \\
2 & 2 & 2 & 2 & 2 & 2 & 5 & 5 & 5 & 5 & 6 & 1 & 5 & 5 & 5 & 5 & 5 & 5 & 5 \\
3 & 3 & 3 & 5 & 5 & 5 & 2 & 2 & 2 & 6 & 5 & 5 & 1 & 2 & 2 & 2 & 2 & 4 & 4 \\
4 & 5 & 5 & 3 & 3 & 6 & 3 & 3 & 6 & 2 & 2 & 2 & 2 & 1 & 3 & 3 & 4 & 2 & 3 \\
5 & 4 & 6 & 4 & 6 & 3 & 4 & 6 & 3 & 3 & 3 & 3 & 3 & 3 & 1 & 4 & 3 & 3 & 2 \\
6 & 6 & 4 & 6 & 4 & 4 & 6 & 4 & 4 & 4 & 4 & 4 & 4 & 4 & 4 & 1 & 1 & 1 & 1 \\
\bottomrule
\end{tabular}
\caption{The smallest non-bipartite peak-pit domain. The domain has maximum width.}\label{fig:nb}
\end{table}

In a bipartite peak-pit domain  we have no restrictions on the never conditions for triples which intersect both $A$ and $\bar{A}$.  A natural subclass, which contains many classical Condorcet domains, are those which have a societal axis and never conditions which are determined by a subset of that axis. 
\begin{definition}
    Given a linear order $>$, the societal axis, on the set of alternatives, a domain $D$ is a \emph{midpoint bipartite peak-pit domain} if there exist a subset $A$ of the alternatives such that any triple $(a,x,b)$, where $a>x>b$, satisfies a never bottom condition when $x \in A$ and a never top condition when $x\in \bar{A}$. 
\end{definition}
The restriction of the domain $D$ to $A$ is an Arrow's  single-peaked domain and the restriction to $\bar{A}$ is the dual of an Arrow's single-peaked domain. Hence a midpoint bipartite peak-pit domain is indeed a bipartite peak-pit domain. Given a midpoint bipartite peak-pit domain $D$, its restriction to a subset of the alternatives is also a midpoint bipartite peak-pit domain.

The best-known examples of midpoint bipartite peak-pit domains are those defined by Fishburn's alternating scheme. There the sets $A$ and $\bar{A}$ are given by the even and odd integers respectively. The generalized Fishburn domains studied in \citep{Karpov2023, slinko2023} are also midpoint bipartite peak-pit domains. 

The smallest examples of maximal peak-pit Condorcet domain which are not midpoint bipartite peak-pit domain have four alternatives. One such example is the single-crossing domain \citep{slinko2021} on four alternatives. In this domain there is one never-top triple and three never-bottom (or dual of it), giving two triples with different never conditions but the same midpoint.

Note that unlike the bipartite peak-pit domain, this class is not closed under isomorphism, since an isomorphism typically changes the societal axis in a way which means that midpoints for triples are not preserved.  The dual of a midpoint bipartite peak-pit domain will however be a midpoint bipartite peak-pit domain.

For $n\leq 7$ we have used the data from \cite{n7paper}  to find all maximal peak-pit domains which are mid-point bipartite. The results are displayed in Table \ref{fig:mbb}.
\begin{table}[h]
\centering

\begin{tabular}{lllllllll}
\toprule
$n$& Total & \multicolumn{7}{l}{$(s, N )$} \\
\midrule
$4$ & 10 & (9, 1) & (8, 5) & & & & & \\
$5$ & 181 & (20, 2) & (19, 4) & (18, 2) & (16, 14) & & & \\
$6$& 9939& (45, 1)& (44, 4)& (42, 9)& (39, 16)& (38, 6)& (36, 3)& (32, 85)\\
$7$&1465680 &(100, 2)& (97, 4) & (96, 12) & (91, 4)& (89, 32)& (88, 6) & (87, 2) \\
& & (86, 16)& (84, 6)&(79, 128)&(78, 20)&(76, 8) & (72, 6)&(64, 1136) \\

\bottomrule
\end{tabular}
\caption{The number of midpoint bipartite maximal peak-pit domains. The column labelled Total gives the total number of maximal peak-pit domains for each $n$.  A pair $(s,N)$ means that there are $N$ midpoint bipartite domains of size $s$}\label{fig:mbb}
\end{table}
Note that the midpoint bipartite maximal peak-pit domains all have size at least $2^{n-1}$ for $n\leq 7$. We conjecture that this is true for all $n$.
\begin{conj}
    Each midpoint bipartite maximal peak-pit domain on $n$ alternatives has size at least $2^{n-1}$.
\end{conj}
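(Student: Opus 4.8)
The plan is to first reduce the conjecture to a clean combinatorial counting statement and then attack that statement by induction. Since midpoint bipartiteness is defined relative to a fixed societal axis, I may assume the axis is $1<2<\cdots<n$ and that the domain is witnessed by a subset $A\subseteq[n]$, so that every triple $(a,x,b)$ with $a<x<b$ satisfies a never-bottom condition when $x\in A$ and a never-top condition when $x\in\bar A$. Let $D_A$ be the set of \emph{all} linear orders on $[n]$ satisfying this entire family of never conditions. By Sen's theorem $D_A$ is a peak-pit Condorcet domain, it is itself midpoint bipartite with witness $(<,A)$, and any midpoint bipartite peak-pit domain with this witness is contained in it. Hence a maximal peak-pit domain that is midpoint bipartite must equal $D_A$ for some $A$, and it suffices to prove $|D_A|\ge 2^{n-1}$ for every $A\subseteq[n]$.

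The key reformulation is to encode an order by its height function $h$ on the axis, where $h(i)$ is the rank of alternative $i$ (larger height means more preferred). Unwinding the never conditions, an order lies in $D_A$ if and only if every $i\in A$ is higher than all alternatives lying on at least one of its two axis-sides (i.e.\ $i$ is a left-to-right or a right-to-left maximum of $h$), and every $i\in\bar A$ is lower than all alternatives on at least one side (a left-to-right or a right-to-left minimum). When $A=[n]$ these are exactly the unimodal permutations, of which there are $2^{n-1}$, recovering the single-peaked count; when $A=\emptyset$ one gets their reverses. Note also that the strictly decreasing and strictly increasing height functions satisfy \emph{both} kinds of condition at every position, so $D_A$ always contains the axis order $12\cdots n$ and its reverse, and in particular has maximal width.

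I would then prove $|D_A|\ge 2^{n-1}$ by induction on $n$, peeling the axis-extreme alternative $n$. Deleting $n$ from every order sends $D_A$ into $D_{A\cap[n-1]}$ on $[n-1]$; checking this map is well-defined and surjective is the manageable part, the surjectivity (extension property) being comparatively easy because an axis-extreme alternative has one empty side, so its own never condition is automatically satisfiable at every insertion point. The induction would then close from the doubling bound $|D_A|\ge 2\,|D_{A\cap[n-1]}|$, which matches the base case $|D_A|=1=2^{0}$ for $n=1$.

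The doubling is the crux, and this is where I expect the real difficulty. A naive "every order on $[n-1]$ has at least two extensions'' is simply false: already for the single-dipped domain on three alternatives, restriction to the first two alternatives produces one order with three preimages and one order with only a single preimage. The hard part is therefore a global argument showing that the fibres of the deletion map total at least $2\,|D_{A\cap[n-1]}|$ even though individual fibres may have size one. I expect this forces a strengthening of the induction hypothesis to control a finer statistic — for example refining the count according to the bottom (or top) alternative of each order, or according to which side of each alternative is the ``saturated'' one — so that the small fibres over some orders are provably compensated by large fibres over others and the recursion balances. An alternative route is to first establish that $D_A$ is connected and then to embed an $(n-1)$-dimensional cube into its median transition graph anchored at the two reversed orders guaranteed above; the obstacle there is that maximal width together with connectivity do not by themselves force such a spanning cube, so one would still have to exploit the explicit one-sided structure of $D_A$ to construct the cube by hand.
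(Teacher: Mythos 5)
Your proposal breaks at the very first reduction, before the step you yourself flag as the crux. The definition of a midpoint bipartite peak-pit domain only requires that each triple $(a,x,b)$ with $a<x<b$ satisfies \emph{some} never-bottom condition when $x\in A$ (any of $1N3$, $2N3$, $3N3$ in Fishburn's notation) and \emph{some} never-top condition when $x\in\bar{A}$. Your $D_A$ hard-codes the particular choice that the midpoint itself is never bottom (resp.\ never top), i.e.\ the conditions $2N3$/$2N1$ — the generalized Fishburn family. Hence the claim that ``any midpoint bipartite peak-pit domain with this witness is contained in $D_A$'', and with it the claim that every maximal such domain equals some $D_A$, is false. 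Already for $n=3$ and $A=\{2\}$: the domain $\{123,132,213,312\}$, defined by the never condition $1N3$ (alternative $1$ never bottom), is a maximal peak-pit domain, is midpoint bipartite with witness $A=\{2\}$, and contains the order $132$ in which the midpoint $2$ is ranked last; so it is not contained in your $D_{\{2\}}=\{123,213,231,321\}$, nor equal to any $D_{A'}$. This exclusion is not marginal: the paper's set-alternating domains (conditions $1N3$/$3N1$) are all of this excluded kind, and the counts in Table~\ref{fig:mbb} — far more midpoint bipartite maximal domains (over a thousand for $n=7$) than the at most $2^{n-2}$ sets $A$ that can matter — show that the class is vastly richer than your one-domain-per-$A$ picture. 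A correct proof must handle an arbitrary admissible assignment of never conditions, which is precisely what makes the statement hard.

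Even for the single family you do consider, what you have is a plan rather than a proof: the doubling inequality $|D_A|\ge 2\,|D_{A\cap[n-1]}|$ is stated as a goal, you correctly note that naive fibrewise counting fails (some fibres of the deletion map have size one), and you then only list candidate strategies — a strengthened induction statistic, or an embedded cube in the median graph — without carrying either one out. For calibration: the paper does not prove this statement either; it is posed as a conjecture, verified computationally for $n\le 7$, and the bound $2^{n-1}$ is established only for the strict subclass of set-alternating domains, via an exact recursion (their Propositions~\ref{double} and~\ref{general} and Corollary~\ref{lowb}) that exploits the specific $1N3$/$3N1$ structure, namely that only a known short suffix of alternatives can be ranked last. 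So a complete argument here would be genuinely new; as it stands, your proposal neither covers the right class of domains nor closes its own key step.
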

In section 4 we will show that this lower bound is true for the domains generated by set-alternating schemes, a strict subclass of the midpoint bipartite peak-pit domains.

Up to $n=7$, the maximum size Condorcet domains for each $n$ are  midpoint bipartite peak-pit domains. However, the maximum size Condorcet domain for $n=8$, which is a bipartite peak-pit domain, is not a midpoint bipartite domain\footnote{This has been checked computationally.}, giving the first example of a maximum Condorcet domain which does not belong to this class.

In our next section we will proceed to study set-alternating schemes. These define a class midpoint bipartite peak-pit domains which use only two distinct never conditions but, as we shall show, for large $n$ they nonetheless produce the largest known Condorcet domains.

\section{Set-alternating schemes}
\label{sec:set_alter}

Our focus in this section is a class of domains which have a societal axis and relative to that axis are defined using only the never conditions $1N3$ and $3N1$. As we will see our particular class of such domains are mathematically natural, but the general class of domains using only these never conditions also have a natural social choice interpretation. Given a societal axis the never conditions $1N3$ and $3N1$  can both be seen as expressing a weak form of agreement with the ranking on the axis. The condition $1N3$ for a triple $a>b>c$ implies that the domain never ranks $a$ last of the three, and $3N1$ means that $c$ is never placed first.  Consequently, this can be used to model a relatively homogeneous society where the population mostly agree with a common societal axis. There  are some exceptional domains in the class. If we have only $1N3$ for all triples we get a unique Arrow's single-peaked domains, which is minimally rich, with only two alternatives ranked last. Dually, the domain with only $3N1$ has only two top alternatives and all alternatives are ranked last in some order.  However, for  all other domains in this class the set of top ranked alternatives, and last ranked alternatives, are both strict subsets of the alternatives, often quite small.  So, in this type of  homogeneous societies, or domains,  there is little diversity among the highest and lowest ranked alternatives. Strikingly, as our results will show, in terms of domain size some of these seemingly austere societies provide a larger variety of opinions than the single-peaked domains. Though here the source of variation is a diversity of alternatives ranked near the middle of the societal axis, rather than a variety of potential top or bottom alternatives. 

Let us start our analysis by defining the class of set-alternating schemes. 
\begin{definition}
Starting with a subset $A \subseteq [n]$ we consider the following  never conditions on triples $L(X)$: For $i < j < k$ with $j \in A$ assign the never condition $1N3$. For $i < j < k$ with $j \notin A$ we assign the never condition $3N1$. This is the \emph{set-alternating scheme} generated by $A$.

We let $D_X(A)$ denote the Condorcet domain which is generated by this scheme and let $f_n(A)$ denote the cardinality of $D_X(A)$.  
\end{definition}

Given that our definition of set-alternating schemes could easily be modified to use any other pair of never conditions, one may ask why we focus on the particular pair $1N3, 3N1$? If one tests out all pairs of peak-pit never conditions on all sets $A$ for some small $n$, say $n=5$,  one finds that most pairs do not generate a copious domain for every $A$, but a small set of pairs do. These pairs fall into four families.

The first are those which assign the same never condition to every triple. These domains have size $2^{n-1}$ by~\cite{raynaud1981paradoxical} and the domain does not depend on $A$, but it does depend on the choice of never condition.  The second family are those which use two never-bottom conditions, or two never-top conditions. This is a subset of the class of Arrow's single-peaked domains, and their duals.  These also have size $2^{n-1}$ but here the domain does depend on $A$. The class of all Arrow's single-peaked domains has been studied in detail in~\cite{slinko2019condorcet}. The third family uses the pair $2N1/2N3$. If we use this pair and  take $A$ to be the set of odd numbers, or even numbers, we recover Fishburn's alternating scheme. Generalised versions of Fishburn's domains are discussed in \cite{Karpov2023}).  The fourth family, which we will show to be copious for all $n$, is given by the pair $1N3,3N1$ and is the focus of the remainder of the paper.

Let us consider some examples. If $A=[n]$, then all triples are assigned the $1N3$ never condition. This gives an Arrow's single-peaked domain, but it is not Black's single-peaked since the domain does not contain two mutually reversed orders. This is the only Arrow's single-peaked domain given  by a set-alternating scheme.  By \cite{slinko2019condorcet} the cardinality of $D_X([n])$ is $2^{n-1}$.  For $A=\emptyset$ we get a domain which can be related to that for the empty set via the following lemma.
\begin{lemma}\label{compl}
    Define the \emph{reverse complement} $A^*$ of $A$ to be $A^*=n+1-\overline{A}$, where the subtraction means the set of numbers $n+1-i$ for all $i\in \overline{A}$.

    Let $C_1$ be the domain defined by the set $A$ and $C_2$ the domain given by $A^*$ then $C_2$ is obtained from $C_1$ by first reversing all orders and then reversing the names of the alternatives.
\end{lemma}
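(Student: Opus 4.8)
The plan is to exhibit an explicit bijection between $C_1 = D_X(A)$ and $C_2 = D_X(A^*)$ realised by the two operations named in the statement, and to verify that it respects the never conditions. Write $R\colon L(X)\to L(X)$ for the map that reverses each order (the dual), and $\Psi\colon L(X)\to L(X)$ for the relabelling induced by $\psi(a)=n+1-a$; both are involutive bijections, so the composite $\Phi=\Psi\circ R$ is a bijection of $L(X)$. Since $D_X(A)$ is by definition the set of all linear orders whose restriction to every triple satisfies the never condition assigned by the scheme, it suffices to show that $\Phi$ carries the per-triple conditions for $A$ exactly onto those for $A^*$; the global identity $\Phi(C_1)=C_2$ then follows because membership in either domain is a conjunction of triple-wise constraints and $\Phi$ permutes triples via $T\mapsto\Phi(T)$.

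First I would record how the two reversals act on a single triple. Fix $T=\{i,j,k\}$ with $i<j<k$, so that the axis ranks $i,j,k$ in positions $1,2,3$. Reversing an order sends the restricted position $q$ to $4-q$, and relabelling sends the alternative $a$ to $n+1-a$, so the image triple is $\Phi(T)=\{\,n+1-k,\,n+1-j,\,n+1-i\,\}$ whose axis order is $n+1-k<n+1-j<n+1-i$; in particular the midpoint $j$ is sent to the midpoint $n+1-j$. Combining the two effects, if an alternative $a$ occupies restricted position $q$ in $P|_T$, then $n+1-a$ occupies position $4-q$ in $\Phi(P)|_{\Phi(T)}$.

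From this local description the condition swap is immediate. The smallest alternative $i$ (rank $1$ on the axis) being forbidden from the bottom (position $3$), i.e. $1N3$ on $T$, is equivalent to its image $n+1-i$, now the largest alternative (rank $3$ on the axis of $\Phi(T)$), being forbidden from the top (position $1$), i.e. $3N1$ on $\Phi(T)$; symmetrically $3N1$ on $T$ becomes $1N3$ on $\Phi(T)$. So $\Phi$ swaps the two never conditions while moving the midpoint from $j$ to $n+1-j$. It remains to check that this matches the assignment made by the scheme for $A^*$. Using $A^*=n+1-\overline{A}$, one has $m\in A^*$ iff $n+1-m\notin A$; hence for the image midpoint $m=n+1-j$ the scheme for $A^*$ assigns $1N3$ exactly when $j\notin A$ and $3N1$ exactly when $j\in A$. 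This is precisely the pairing produced by $\Phi$ in the previous sentence, so the transported conditions for $A$ coincide triple-by-triple with the conditions for $A^*$, giving $\Phi(C_1)=C_2$.

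The routine part is the position/name arithmetic; the only place demanding care is keeping the two independent reversals aligned --- the reversal of positions ($q\mapsto 4-q$) turns a never-bottom into a never-top, while the reversal of names ($a\mapsto n+1-a$) simultaneously turns the ``smallest'' role into the ``largest'' role and sends $A$-membership of the midpoint to $A^*$-membership of its image. The content of the lemma is exactly that these two reversals are synchronised by the definition $A^*=n+1-\overline{A}$, so I expect the main (if modest) obstacle to be bookkeeping the index arithmetic carefully enough to confirm that synchronisation, rather than any structural difficulty.
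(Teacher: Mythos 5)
Your proof is correct and follows essentially the same route as the paper's: track how order-reversal sends a restricted position $q$ to $4-q$ (turning $xNp$ into $xN(4-p)$) and how the relabelling $a\mapsto n+1-a$ turns $xNp$ into $(4-x)Np$ while mapping the triple $\{i,j,k\}$ to $\{n+1-k,n+1-j,n+1-i\}$, so the composite swaps $1N3$ and $3N1$ triple-by-triple. The only difference is that you explicitly carry out the final bookkeeping step --- checking that the midpoint $n+1-j$ lies in $A^*$ exactly when $j\notin A$ --- which the paper leaves implicit, so your write-up is if anything slightly more complete.
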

\begin{proof}
    Assume that we have a triple $(a,b,c)$ satisfying a never condition $xNp$, in Fishburn's notation. 
    
    Note that when all orders are replaced by their dual, every never condition $xNp$ is replaced by $xN(4-p)$.   
    
    Similarly, when the list of names is reversed a never-condition $xNp$ is replaced by $(4-x)Np$. Here the triple is  transformed to $(n+1-c,n+1-b,n+1-a)$.

    So the joint reversal leads  replaces $3N1$ by $1N3$, and replaces $(a,b,c)$ by $(n+1-c,n+1-b,n+1-a)$.
\end{proof}

For $n=4$, $D_X(\{2\})$ is isomorphic to Fishburn's alternating scheme. $D_X(\{2\})$ leads to a $1N3$ never condition for triples 1,2,3 and 1,2,4, and a $3N1$ never condition for triples 1,3,4 and 2,3,4. The same set of never conditions follows from $2N3$ never condition in case of median 1 and $2N1$ never condition in case of median 4 for ordering of alternatives 2143. For $n>4$ there is no $A$ corresponding to Fishburn's alternating scheme. 

Here we may ask which domain sizes set-alternating schemes more generally generate.  In Figure~\ref{fig1} we plot the pair (set size, domain size) for all subsets of $\{2,\ldots,n-1\}$ for $n=8$ and 9.  We exclude 1 and $n$ from $A$ since they cannot be midpoints of a triple and hence do not affect which domain we generate.  As we can expect from Lemma~\ref{compl} the  size distribution is symmetrical.  The lowest domain sizes in the figure are of the form $2^{n-1}$ and in Corollary~\ref{lowb}  we show that this is the minimum possible size.  We also see many schemes which lead to much larger domains and we can identify the maximum ones.
\begin{figure}
	\centering
	\begin{subfigure}{0.49\linewidth}
	\centering
	 \includegraphics[width=0.69\textwidth]{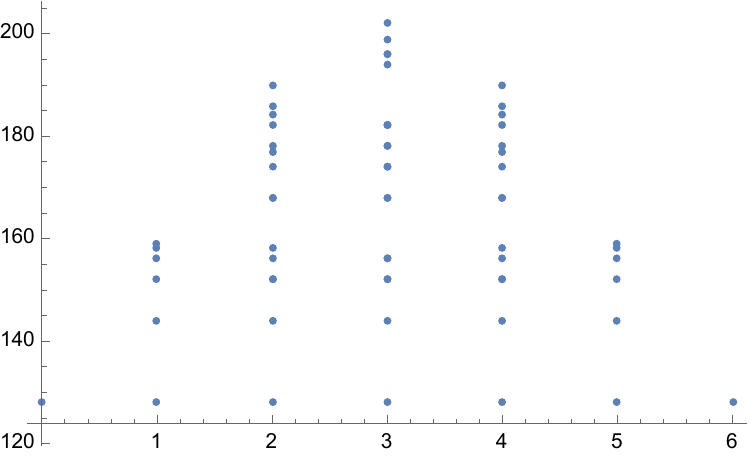}
	\caption{$n=8$}

	\end{subfigure}
  \hfill
	\begin{subfigure}{0.49\linewidth}
	\centering
	 \includegraphics[width=0.69\textwidth]{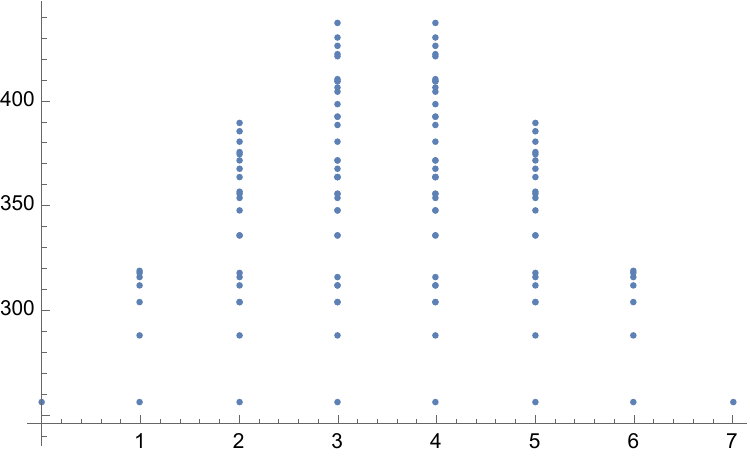}
\caption{$n=9$}
	\end{subfigure}
 \caption{Domain size and set size for all subsets of $\{2,\ldots,n-1\}$}

 \label{fig1}
\end{figure}

\begin{definition}
    $D_X(B_n)$ is the domain given by the \emph{odd  $1N33N1$-alternating scheme} if $$B_n=\{2,3,5,\ldots ,n-3+p_n\},$$ where $p_n=( n \bmod 2)$.
\end{definition}
This scheme produces the largest domain sizes for small $n$ and we will later investigate their growth rate.  For even $n$ $B_n$ is equal to its reverse complement. For odd $n$ the reverse complement consists of the even numbers $B_n^*=\{2,\ldots n-3\}$, which by Lemma~\ref{compl} generates a domain of the same size as $B_n$. If we extend  this  set of even numbers by $n-1$ we get a new scheme which we will use in  our asymptotic analysis. 
\begin{definition} 
    $D_X(A_n)$ is the result of the \emph{even $1N33N1$-alternating scheme} if $$A_n=\{2,4,6,\ldots ,n-2+p_n\},$$ where $p_n=( n \bmod 2)$.
\end{definition}
We will refer to the scheme given  by taking $B_n^*$ for odd $n$ and $A_n\setminus (n-2)$ for even $n$ as the \emph{truncated even $1N33N1$-alternating scheme}.

\begin{figure}[ht]
    \centering\small
    \begin{tikzpicture}[scale=1.]
    
    \centering
    \node[] (132564) at (0,0) {\textbf{132564}};
    \node[] (132546) at (-1,-1) {\textbf{132546}};
    \node[] (135246) at (1,-1) {\textbf{135246}};
    \node[label={[label distance=-1.6cm]0:\textbf{135264}}] (135264) at (2,0) {\phantom{1}};

    \node[] (312564) at (0,-2) {\textbf{312564}};
    \node[] (312546) at (-1,-3) {\textbf{312546}};
    \node[] (315246) at (1,-3) {\textbf{315246}};
    \node[label={[label distance=-1.6cm]0:\textbf{315264}}] (315264) at (2,-2) {\phantom{1}};

    \node[] (351246) at (1,-5) {\textbf{351246}};
    \node[label={[label distance=-0.1cm]-3:\textbf{351264}}] (351264) at (2,-4) {\phantom{1}};

    \node[] (135624) at (3,0.68) {\textbf{135624}};
    \node[] (315624) at (3,-1.32) {\textbf{315624}};
    \node[label={[label distance=-0.3cm]-0.1:\textbf{351624}}] (351624) at (3.,-3.85) {\phantom{1}};
    \node[] (356124) at (4,-2.83) {\textbf{356124}};

    \node[] (134265) at (-5,-1) {\textbf{134256}};
    \node[] (132465) at (-4,0) {\textbf{134265}};
    \node[] (132456) at (-2,0) {\textbf{132465}};
    \node[] (134256) at (-3,-1) {\textbf{132456}};
    \node[] (314265) at (-4,-2) {\textbf{314265}};
    \node[] (312465) at (-2,-2) {\textbf{312465}};
    \node[] (312456) at (-3,-3) {\textbf{312456}};
    \node[] (314256) at (-5,-3) {\textbf{314256}};
    \node[] (341256) at (-5,-5) {\textbf{341256}};
    \node[] (341265) at (-4,-4) {\textbf{341265}};

    \node[] (214356) at (-5,3) {\textbf{214356}};
    \node[] (214365) at (-4,4) {\textbf{214365}};
    \node[] (213465) at (-2,4) {\textbf{213465}};
    \node[] (213456) at (-3,3) {\textbf{213456}};
    \node[] (124356) at (-5,1) {\textbf{124356}};
    \node[] (124365) at (-4,2) {\textbf{124365}};
    \node[] (123465) at (-2,2) {\textbf{123465}};
    \node[] (123456) at (-3,1) {\textbf{123456}};

    \node[] (213546) at (-1,3) {\textbf{213546}};
    \node[] (213564) at (0,4) {\textbf{213564}};
    \node[label={[label distance=-1.6cm]-1.5:\textbf{215364}}] (215364) at (2,4) {\phantom{1}};
    \node[] (215346) at (1,3) {\textbf{215346}};
    \node[] (123546) at (-1,1) {\textbf{123546}};
    \node[] (123564) at (0,2) {\textbf{123564}};
    \node[label={[label distance=-1.6cm]-0.1:\textbf{125364}}] (125364) at (2,2) {\phantom{1}};
    \node[] (125346) at (1,1) {\textbf{125346}};

    \node[] (215634) at (3,4.7) {\textbf{215634}};
    \node[] (125634) at (3,2.7) {\textbf{125634}};

    \draw[] (132564.south) -- (132546.north);
    \draw[] (132564.south) -- (135264.south);
    \draw[] (132546.north) -- (135246.north);
    \draw[] (135246.north) -- (135264.south);

    \draw[] (312564.south) -- (312546.north);
    \draw[] (312564.south) -- (315264.south);
    \draw[] (312546.north) -- (315246.north);
    \draw[] (315246.north) -- (315264.south);

    \draw[] (132546.south) -- (312546.north);
    \draw[] (135246.south) -- (315246.north);
    \draw[] (135264.south) -- (315264.south);
    \draw[] (132564.south) -- (312564.north);

    \draw[] (315246.south) -- (351246.north);
    \draw[] (315264.south) -- (351264.south);
    \draw[] (351246.north) -- (351264.south);

    \draw[] (135624.south) -- (315624.north);
    \draw[] (315624.south) -- (351624.north);
    \draw[] (135264.south) -- (135624.south);
    \draw[] (315264.south) -- (315624.south);
    \draw[] (351264.south) -- (351624.north);
    \draw[] (351624.north) -- (356124.south);

    \draw[] (314265.south) -- (314256.north);
    \draw[] (314265.south) -- (312465.south);
    \draw[] (314256.north) -- (312456.north);
    \draw[] (312456.north) -- (312465.south);

    \draw[] (132465.south) -- (314265.north);
    \draw[] (134265.south) -- (314256.north);
    \draw[] (134256.south) -- (312456.north);
    \draw[] (132456.south) -- (312465.north);
    \draw[] (314265.south) -- (341265.north);
    \draw[] (314256.south) -- (341256.north);
    \draw[] (341265.south) -- (341256.north);

    \draw[] (132465.south) -- (134265.north);
    \draw[] (132465.south) -- (132456.south);
    \draw[] (134265.north) -- (134256.north);
    \draw[] (134256.north) -- (132456.south);

    \draw[] (134256.north) -- (132546.north);
    \draw[] (312456.north) -- (312546.north);

    \draw[] (214365.south) -- (214356.north);
    \draw[] (214365.south) -- (213465.south);
    \draw[] (214356.north) -- (213456.north);
    \draw[] (213456.north) -- (213465.south);

    \draw[] (124365.south) -- (124356.north);
    \draw[] (124365.south) -- (123465.south);
    \draw[] (124356.north) -- (123456.north);
    \draw[] (123456.north) -- (123465.south);

    \draw[] (213564.south) -- (213546.north);
    \draw[] (213564.south) -- (215364.south);
    \draw[] (213546.north) -- (215346.north);
    \draw[] (215346.north) -- (215364.south);

    \draw[] (123564.south) -- (123546.north);
    \draw[] (123564.south) -- (125364.south);
    \draw[] (123546.north) -- (125346.north);
    \draw[] (125346.north) -- (125364.south);

    \draw[] (214365.south) -- (124365.north);
    \draw[] (214356.south) -- (124356.north);
    \draw[] (213465.south) -- (123465.north);
    \draw[] (213456.south) -- (123456.north);

    \draw[] (213564.south) -- (123564.north);
    \draw[] (213546.south) -- (123546.north);
    \draw[] (215364.south) -- (125364.south);
    \draw[] (215346.south) -- (125346.north);

    \draw[] (213456.north) -- (213546.north);
    \draw[] (123456.north) -- (123546.north);

    \draw[] (123465.south) -- (132456.north);
    \draw[] (123456.south) -- (134256.north);
    \draw[] (123564.south) -- (132564.north);
    \draw[] (123546.south) -- (132546.north);

    \draw[] (215364.south) -- (215634.south);
    \draw[] (125364.south) -- (125634.south);
    \draw[] (215634.south) -- (125634.north);

    \end{tikzpicture}
    \caption{The median graph for the even $1N33N1$-alternating scheme for $n=6$.}
    \label{fig:median_graph}
\end{figure}
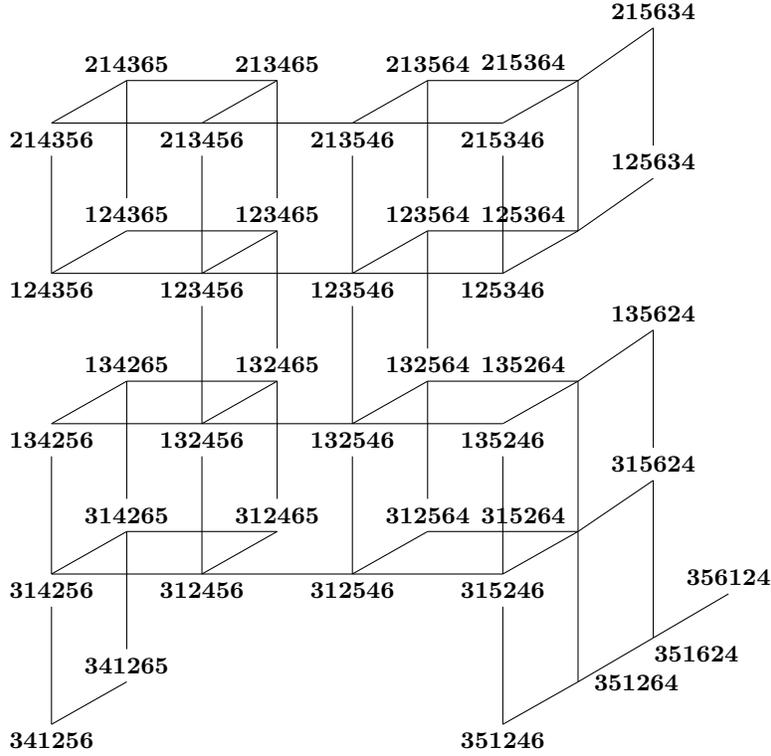

Figure~\ref{fig:median_graph} displays the median graph associated to the even $1N33N1$-alternating scheme for $n=6$.

In Table \ref{tab:size} we show the sizes of the domains generated by these schemes and Fishburn's alternating scheme, each computed using the Condorcet Domain Library (CDL) from \cite{zhou2023new, zhou2023cdl}.

\begin{table}[ht]
    \centering\small
\captionsetup{width=0.7\linewidth}
    \caption{Domain sizes for the \textit{odd} 1N33N1 scheme, the \textit{even} 1N33N1 scheme, and the alternating scheme. The \textit{odd} 1N33N1 scheme surpasses the alternating scheme for $n \ge 16$. }
    \begin{tabular}{ccccc}
    \toprule
    $n$ & \textit{odd} 1N33N1 & \textit{even} 1N33N1 & truncated \textit{even} 1N33N1 &  Fishburn's  alternating scheme\\
    \midrule
    4 & 9 & 9 & 8 & 9 \\
    5 & 19 & 18 & 19 & 20 \\
    6 & 42 & 42 & 39 & 45 \\
    7 & 91 & 84 & 91 & 100 \\
    8 & 202 & 199 & 190 & 222 \\
    9 & 437 & 398 & 437 & 488 \\
    10 & 973 & 950 & 922 & 1069 \\
    11 & 2102 & 1900 & 2102 & 2324 \\
    12 & 4690 & 4554 & 4464 & 5034 \\
    13 & 10122 & 9108 & 10122 & 10840 \\
    14 & 22617 & 21884 & 21587 & 23266 \\
    15 & 48779 & 43768 & 48779 & 49704 \\
    16 & \textbf{109104} & 105323 & 104322 & 105884 \\
    17 & \textbf{235197} & 210646 & 235197 & 224720 \\
    18 & \textbf{526441} & 507398 & 503966 & 475773 \\
    19 & \textbf{1134474} & 1014796 & 1134474 & 1004212 \\
    20 & \textbf{2540586} & 2446022 & 2434088 & 2115186 \\
    \bottomrule
        
    \end{tabular}
    \label{tab:size}
\end{table}

Regarding the maximum domain size we conjecture the following.

\begin{conj} 
    For each $n \geq 5$, the maximum size of set-alternating domain is the size of the odd $1N33N1$-alternating domain.
\end{conj}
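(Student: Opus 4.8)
The plan is to first derive an exact recursion for $f_n(A)=|D_X(A)|$ that exposes how the domain size depends on the indicator string of $A$, and then to show that this recursion is maximized by the alternating pattern defining $B_n$ via a local exchange argument. I would set up the recursion by analysing the effect of adjoining the top alternative $n$ to the axis. For every triple $\{i,j,n\}$ with $i<j<n$ the midpoint is $j$, so the never condition on that triple is determined solely by whether $j\in A$. Fixing an order of the restricted domain on $[n-1]$, I would argue that the admissible insertion positions for $n$ form a contiguous interval (this should follow from Kemeny betweenness and the fact that $G_D$ is a median graph with single-transposition edges, since the domains are copious and connected), whose length is governed by a bounded \emph{state} recording the relative positions of the largest few alternatives together with the membership pattern of $A$ near the top. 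Organising these insertion counts gives a transfer-matrix expression $f_n(A)=\mathbf{u}^{\top}T_{s_{n-1}}\cdots T_{s_2}\,\mathbf{v}$, where $s_j\in\{0,1\}$ is the indicator of $j\in A$; the existence of a fixed-degree linear recurrence (hence a finite state space) is strongly suggested by the polynomial growth rate established in the asymptotic section.

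Next I would reduce the optimization over all $A$ to a purely combinatorial problem on binary words. By Lemma~\ref{compl} the reverse-complement symmetry halves the candidate family, so I may normalise the top of the axis, and Corollary~\ref{lowb} identifies the constant words (all $1$ or all $0$, i.e.\ the Arrow single-peaked cases) as the minimizers at size $2^{n-1}$. This reframes the conjecture as the statement that, among all words $s_2\cdots s_{n-1}$, the transfer-matrix product attains its largest total weight at the word corresponding to $B_n$.

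The heart of the argument is an exchange step: I would prove that introducing an alternation never decreases the size. Concretely, if $s$ contains a block $\ldots 00\ldots$ or $\ldots 11\ldots$ away from the two endpoints, then toggling one of its letters to restore local alternation does not decrease $f_n(A)$. This should reduce to a finite family of inequalities between products $T_aT_bT_c$ for the relevant local contexts, which one would hope to settle from explicit small transfer matrices together with a total-positivity or log-supermodularity property that makes alternating factors dominate, combined with a Perron--Frobenius estimate. Iterating these moves drives any word to the maximally alternating one, and the parity of $n$ determines whether the forced boundary letters make the odd scheme $B_n$ beat the even scheme $A_n$, consistent with the crossover visible in Table~\ref{tab:size}.

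The main obstacle is exactly this exchange step. The moves are local on the word $s$ but act non-locally on the median graph $G_D$, and the transfer matrices neither commute nor are their entries obviously totally positive, so the clean ``alternation dominates'' inequality may fail in isolated local contexts and require a more global potential-function or interlacing argument rather than a letter-by-letter comparison. A secondary, and prerequisite, difficulty is proving in Step~1 that the state space is genuinely finite and that the recursion is \emph{exact} for all $A$ rather than only for the two distinguished families, since the state needed to track insertions could in principle grow with the size of a long homogeneous run in $A$.
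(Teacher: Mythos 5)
This statement is a \emph{conjecture} in the paper: the authors offer no proof at all, only computational verification for $n\leq 24$, so your proposal cannot be measured against a paper argument and must stand on its own. As it stands it has two fatal problems, one of which you partially anticipated. Your Step~1 premise --- that the insertion counts are governed by a bounded state, giving a fixed-dimension transfer-matrix product $\mathbf{u}^{\top}T_{s_{n-1}}\cdots T_{s_2}\mathbf{v}$ --- is not merely unproven but inconsistent with the paper's own asymptotic analysis. Equation~\ref{Catalan} expresses $w(m)=a(2m)$ as a Catalan convolution, $w(m)=\sum_{k=1}^{m}C_{k+1}w(m-k)$, so its generating function is $W(x)=1/\bigl(1-\sum_{k\geq 1}C_{k+1}x^{k}\bigr)$, which is algebraic but \emph{not rational}: it inherits the $\sqrt{1-4x}$ branch point of the Catalan generating function. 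If a bounded-dimension transfer-matrix expression existed, the even-scheme sizes would be entries of powers of the fixed matrix $T_1T_0$, hence would satisfy a constant-coefficient linear recurrence and have a rational generating function --- a contradiction. Equivalently, the paper's partition into parts and the Dyck-word bijection show that the admissible placements of the top alternatives depend on unboundedly long prefixes of the order. So the difficulty you labelled ``secondary'' (state growing with the length of a homogeneous run) is in fact a provable obstruction to the whole setup.

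The heart of your argument, the exchange step ``restoring local alternation never decreases the size,'' is refuted by the paper's data. The conjectured maximizer $B_n=\{2,3,5,7,\ldots\}$ is \emph{not} the maximally alternating set: it contains the adjacent pair $\{2,3\}$, while the fully alternating set is $A_n=\{2,4,6,\ldots\}$. Table~\ref{tab:size} shows $f_n(B_n)>f_n(A_n)$ for every computed $n\geq 5$ ($19>18$ at $n=5$, $202>199$ at $n=8$, $2540586>2446022$ at $n=20$). Hence along any sequence of alternation-restoring toggles taking the word of $B_n$ to the fully alternating word, the size must strictly decrease at some step, so the monotone exchange principle is false as stated. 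Excluding toggles ``away from the two endpoints'' does not rescue the plan: $B_n$ and $A_n$ differ precisely in their boundary pattern, so the entire content of the conjecture --- why a word with defects near the ends beats perfect alternation --- lies exactly in the region your argument leaves undetermined and waves off with a parity remark. The paper's own partial results (Propositions~\ref{double} and~\ref{general}, Corollary~\ref{incl2}) already show the relevant phenomenon is an asymmetric boundary effect (membership of $2$ helps, membership of $n-1$ hurts), not a local alternation-dominance principle, and any successful proof would have to quantify that boundary effect rather than eliminate it.
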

This conjecture has been verified computationally for $n\leq 24$. As we can see for $n<16$ Fishburn's alternating scheme leads to larger domains than the odd $1N33N1$-alternating  scheme.

\begin{prop}\label{prop:maximal}
    Each domain defined by a set-alternating scheme is a  copious peak-pit maximal Condorcet domain.
\end{prop}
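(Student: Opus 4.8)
The plan is to dispose of the \emph{Condorcet} and \emph{peak-pit} claims directly from the definition, reduce \emph{maximality} to \emph{copiousness}, and then treat copiousness as the real content. By construction $D_X(A)$ is the set of all linear orders satisfying exactly one prescribed never condition on every triple, so Sen's theorem makes it a Condorcet domain; and since each prescribed condition is either $1N3$ (an $iN3$, single-peaked condition) or $3N1$ (an $iN1$, single-dipped condition), it is a peak-pit domain. For maximality I would argue that copiousness already forces it: each triple carries all four orders admitted by its never condition, and the condition admits at most four, so each restriction has size exactly four. If a Condorcet domain $D'\supseteq D_X(A)$ with $D'\neq D_X(A)$ existed, pick $v\in D'\setminus D_X(A)$; then $v$ violates the prescribed condition on some triple $T$, so $v|_T$ is one of the two excluded orders and hence is not among the four orders of $D_X(A)|_T$. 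Thus $D'|_T$ would contain at least five orders, which can satisfy no never condition, contradicting the necessity direction of Sen's value restriction that every triple of a Condorcet domain must satisfy some never condition. So maximality follows once copiousness is proved.

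For copiousness I would induct on $n$, the statement being that $D_X(A)|_T$ realises all four admissible orders for every triple $T$ and every $A$. The structural fact driving the induction is that deleting the largest alternative $n$ returns the set-alternating domain on $[n-1]$ generated by $A\cap[n-1]$: every never condition on a triple inside $[n-1]$ is unchanged, since its midpoint is less than $n$; and conversely, appending $n$ as the globally worst alternative to any order of $D_{[n-1]}(A\cap[n-1])$ yields an order of $D_X(A)$, because in each new triple $\{i,j,n\}$ the element $n$ is last, so $1N3$ holds (the smallest element sits above $n$) and $3N1$ holds ($n$ is not first). Hence the restriction of $D_X(A)$ to $[n-1]$ equals $D_{[n-1]}(A\cap[n-1])$, and by the induction hypothesis all four orders are realised on every triple contained in $[n-1]$.

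It then remains to realise the four admissible orders on each triple $\{i,j,n\}$ containing $n$. The two orders with $n$ ranked last are immediate: by inductive copiousness the pair $\{i,j\}$ occurs in both relative orders in $D_{[n-1]}(A\cap[n-1])$, and appending $n$ at the bottom produces the local orders $ijn$ and $jin$. The genuine difficulty is the two orders in which $n$ is ranked above at least one of $i,j$ (namely $inj$ and one of $nij$ or $jni$): here $n$ must be inserted at an intermediate height, and this is constrained from both sides. Writing $U$ for the set of alternatives placed below $n$, the condition $3N1$ forbids $U$ from containing any pair whose larger element lies outside $A$ (otherwise $n$ tops such a triple), while $1N3$ forbids placing $n$ above an alternative $x$ whenever some $y>x$ with $y\in A$ is ranked above $x$. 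The crux is therefore to show that, for every $A$ and for each of these two target positions, one can choose a witness order of $D_{[n-1]}(A\cap[n-1])$ together with an insertion height for $n$ that hits the required local order while respecting both constraints. I expect this coupled, $A$-uniform choice of witness and insertion point to be the main obstacle; once it is in hand, everything else reduces to the bookkeeping already set up above.
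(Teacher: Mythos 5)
Your framing is sound: the Condorcet and peak-pit claims are immediate from Sen's theorem and the definition of $D_X(A)$, and your reduction of maximality to copiousness (a fifth order on a copious triple leaves a restriction satisfying no never condition, hence containing a Condorcet cycle) is essentially the paper's own argument. But there is a genuine gap exactly where you flag it: you never construct, for a triple $\{i,j,n\}$, the two admissible orders in which $n$ is not ranked last ($inj$ and $nij$ when $j\in A$; $inj$ and $jni$ when $j\notin A$). That construction is the entire content of the proposition --- everything else is, as you say, bookkeeping --- and declaring it ``the main obstacle'' is not a proof of it. As written, your induction only shows that the restriction of $D_X(A)$ to a triple containing $n$ has size at least $2$, which is not copiousness; consequently the maximality conclusion does not follow either.

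The paper avoids precisely this difficulty by running the reduction from \emph{both} ends rather than one. It restricts to $[n]\setminus\{1\}$ and pads each order with $1$ on top (legal because an order ranking $1$ first satisfies both $1N3$ and $3N1$ on every triple containing $1$), and separately restricts to $[n]\setminus\{n\}$ and pads with $n$ at the bottom. Induction then gives size-$4$ restrictions for all triples missing $1$ and all triples missing $n$, so only triples of the form $\{1,j,n\}$ remain. For those, ampleness of the two padded subdomains already supplies three of the four orders ($1jn$ and $1nj$ from the first, $j1n$ from the second), and the fourth comes from a \emph{single} explicit witness in $D_X(A)$: the order listing the $\overline{A}$-elements ascending, then $n$, then $1$, then the $A$-elements ascending, whose restriction to $\{1,j,n\}$ is $n1j$ if $j\in A$ and $jn1$ if $j\notin A$. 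Your one-sided approach can in fact be completed in the same spirit --- for instance the order $[\text{elements}<i\text{ ascending}]\,[\overline{A}\cap(i,n)\text{ ascending}]\,n\,i\,[A\cap(i,n)\text{ ascending}]$ lies in $D_X(A)$ and realises $nij$ (and $jni$ when $j\notin A$) --- but checking such witnesses against all never conditions for arbitrary $i$, and producing a second family for $inj$, is exactly the work your proposal defers, and it is strictly more of it than the paper's two-sided trick requires.
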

\begin{proof}
For $n\leq 3$ each maximal set-alternating domain is a maximal copious peak-pit  Condorcet domain.  Let us assume that it is true for $n-1$ alternatives.

First, let us recall that in a copious domain every triple of alternatives satisfies exactly one never condition. 

Suppose there is a maximal set-alternating domain $C$ defined by set $A$ with corresponding $1N33N1$ never-conditions that is not a maximal Condorcet domain. If this is the case then there is a domain $C'\supset C$ that satisfies a set of never-conditions which  does not coincide with the set-alternating conditions. Thus at least one triple initially satisfies a pair of never-conditions, the one from the set-alternating definition and one given by $C'$. This leads to a contradiction if $C$ is copious.

Let $C_1$ be the restriction of $C$ to $[n]\setminus 1$. The domain $C_1$ is also a set-alternating domain and hence copious by our induction assumption.  If we add 1 as the highest ranked element in each order of $C_1$  we get a new domain $C_1+$ on the set of alternatives $[n]$.  Each order in $C_1+$ is compatible with  both $1N3$ and $3N1$ never conditions for triples $1,i,j$, and all other triples satisfy the never condition given by $A$. Hence $C_1+ \subset C$, and  since the restriction of $C_1+$ to any triple not containing 1 has size four this is true for $C$ as well. This shows that any triple not containing 1 satisfies the condition required for $C$ to be copious. 

We can repeat this argument for the restriction of $C$ to $[n]\setminus n$, getting a subdomain $C_n+$, where $n$ is ranked last in every order, which shows that every triple not containing $n$ has a restriction of size 4.  It remains to show that triples of the form $1,j,n$ also give restrictions of size 4.

Since  $C_1$ is copious, and hence ample, $C_1+$ restricted to a triple $1,j,n$ contains both $1jn$ and $1nj$. Similarly $C_n+$ contains both $1jn$ and $j1n$. So the restriction of $C$ to this triple contains $1jn,1nj,j1n$.

The domain $C$ contains orders which rank the alternatives from $\overline{A}$ highest, then $n1$, and finally all alternatives from $A$. If $j\in A$, then the restriction of $C$ to $1,j,n$ contains $n1j$, giving a fourth order and the triples satisfy only the never condition $1N3$. If $j\notin A$, then the restriction of $C$ to $1,j,n$ contains $jn1$, giving a fourth order and the triple satisfies only the never condition $3N1$. 

So, we have shown that the domain $C$ is copious, and it is thereby also a maximal Condorcet domain.
\end{proof}

In addition to potentially large sizes, the set-alternating schemes generate domains with good local structure.

\begin{prop}\label{prop:con}
    Each domain defined by a set-alternating scheme is connected.
\end{prop}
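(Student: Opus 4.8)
The plan is to prove connectedness by induction on $n$, reusing the scaffolding from the proof of Proposition~\ref{prop:maximal}. The base cases $n\le 3$ are immediate, since the domains are small. For the inductive step, write $C=D_X(A)$ and recall from that earlier proof that the restriction $C_1$ of $C$ to $[n]\setminus\{1\}$ is again a set-alternating domain (on $n-1$ relabelled alternatives), and hence connected by the induction hypothesis. Let $C_1{+}\subseteq C$ be the subdomain consisting of exactly those orders of $C$ that rank $1$ first; these are obtained by prepending $1$ to the orders of $C_1$. Since $1$ sits at the top and is never involved in a swap among the remaining alternatives, every legal neighbour swap inside $C_1$ lifts to a legal neighbour swap inside $C_1{+}$, so $C_1{+}$ inherits connectedness from $C_1$.

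The heart of the argument is a local lemma: for every order $P\in C$ in which $1$ is not ranked first, swapping $1$ with the alternative $x$ immediately above it produces an order that is still in $C$. The reason I expect this to hold is that only triples $\{1,x,z\}$ have their restriction altered by the swap, and because $1$ and $x$ occupy consecutive ranks, every other alternative $z$ lies either above or below this pair both before and after the swap. Splitting into the cases $z>x$ and $1<z<x$, and within each the sub-cases determined by whether the triple's midpoint lies in $A$ (giving never condition $1N3$ or $3N1$) and by whether $z$ sits above or below the pair, one examines the few resulting configurations. In every configuration either the pre-swap restriction already violates the triple's never condition, contradicting $P\in C$, or the post-swap restriction still satisfies it; so the swap is always legal. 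Iterating it lets us bubble $1$ up to the top along a sequence of orders all lying in $C$.

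With the lemma in hand the induction closes quickly. Starting from an arbitrary $P\in C$, repeatedly moving $1$ upward yields a legal swap-path inside $C$ from $P$ to some $P'\in C_1{+}$. Given two orders $P,Q\in C$, route both to orders $P',Q'\in C_1{+}$ in this way, connect $P'$ to $Q'$ inside the connected subdomain $C_1{+}$, and reverse the first path; this is a legal swap-path from $P$ to $Q$. Equivalently, applying the lemma in turn to $1$, then to the axis-minimum of the restricted set-alternating domain on $\{2,\ldots,n\}$, and so on, sorts any order of $C$ into the reference order $12\cdots n$ by neighbour swaps staying in $C$.

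The main obstacle is the local lemma, specifically confirming that the configuration analysis never leaves a \emph{legal} pre-swap order whose swap is \emph{illegal}. This clean behaviour is special to the pair $1N3,3N1$: since $1$ is the axis-minimum of every triple containing it, $1N3$ forbids exactly the orders placing $1$ last while $3N1$ forbids exactly those placing the axis-maximum first, and in each blocking configuration it is precisely the pre-swap order that is forbidden. As a sanity check I would verify the symmetric statement, that $n$ can always be bubbled to the bottom, which follows either from the same analysis or, more economically, from the reverse-complement duality of Lemma~\ref{compl}; only the statement for $1$ is actually needed to close the induction.
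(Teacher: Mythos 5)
Your proof is correct and is essentially the paper's own argument in mirror image: the paper inducts on $n$ by bubbling the axis-maximum $n$ down to the bottom through legal neighbour swaps and identifying the orders with $n$ ranked last with the set-alternating domain of $A\setminus\{n\}$ on $[n-1]$, whereas you bubble the axis-minimum $1$ up to the top and use the restriction to $\{2,\ldots,n\}$ — the two are interchangeable under the reverse-complement duality of Lemma~\ref{compl}, as you yourself observe. Your swap lemma and its case analysis are sound (the paper simply asserts the corresponding step without detail), so no gap remains.
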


\begin{proof}
For small $n\leq 3$  domains defined by a set-alternating scheme are connected.  Let us assume that this also is true for $n-1$ alternatives.

Let a domain $C$ with $n$ alternatives be defined by the set-alternating scheme of the set $A$. For each order $a_1a_2\ldots a_n\in C$ such that $a_k=n$ orders $a_1\ldots a_{k-1}na_{k+1} \ldots a_n$, $a_1\ldots a_{k-1}a_{k+1}na_{k+2}\ldots  a_n$, $a_1\ldots a_{k-1}a_{k+1}\ldots  a_nn$ belong to the domain $C$. Thus each order from $C$ is connected with an order from $C$ that has $n$ ranked last. The set of orders with $n$ last constitute a domain that can be obtained from the domain $C'$ on $[n-1]$ defined by the set-alternating scheme of the set $A\setminus n$ by concatenating $n$ to each order. Because of the induction hypothesis the domain $C'$ is connected. Thus the domain $C$ is connected. 
\end{proof}

All set-alternating domains have a well-structured common part. The Condorcet domain which for each triple satisfies the two never conditions 1N3 and 3N1 is a subset of every set-alternating domain. For unitary domains the common part has the following structure. Each order from the common part is obtained by swaps of  disjoint pairs of neighbouring alternatives in order $12\ldots n$. For each triple of neighbouring alternatives $x-1,x,x+1$ we have orders $x-1xx+1,xx-1x+1,x-1x+1x$ as the  restriction to the set $\{x-1,x,x+1\}$.

\begin{prop}\label{prop:Fibonacci}
    The size of the domain on $n$ alternatives which  for each triple satisfies never conditions 1N3 and 3N1, is the $(n+1)^{th}$ Fibonacci number.
\end{prop}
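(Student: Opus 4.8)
The plan is to prove by induction on $n$ that the size $g(n)$ of this domain satisfies the Fibonacci recurrence $g(n)=g(n-1)+g(n-2)$, which together with the base values $g(1)=1$ and $g(2)=2$ identifies $g(n)$ with the $(n+1)^{th}$ Fibonacci number $F_{n+1}$. The whole argument rests on locating the top element $n$ inside an admissible order. Since every triple $\{a,b,n\}$ with $a,b<n$ must satisfy $3N1$, the largest element $n$ can never be ranked first in such a triple, so $n$ can be ranked above at most one other alternative. Hence $n$ occupies either the last or the second-to-last position in every order of the domain, and this dichotomy is what drives the recurrence.

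First I would treat the case where $n$ is ranked last. Deleting $n$ from such an order yields an order on $[n-1]$, and since the never conditions $1N3,3N1$ are inherited by every triple not containing $n$, this restriction lies in the corresponding domain on $[n-1]$; conversely, appending $n$ at the bottom of any admissible order on $[n-1]$ produces an admissible order on $[n]$, because in every new triple $\{a,b,n\}$ the largest element $n$ is ranked last, which is compatible with both $1N3$ and $3N1$. This bijection accounts for $g(n-1)$ orders. The second case is where $n$ sits in the second-to-last position, with a single alternative $y$ ranked below it. Here the crux is to pin down $y$: the alternative $y$ is ranked last overall, hence last in every triple containing it, so by $1N3$ it can never be the smallest element of such a triple, which forces $y$ to have at most one larger alternative; since $n>y$ this means $y=n-1$. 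Thus every order in this case ends with $n(n-1)$, and deleting both $n$ and $n-1$ sets up a bijection with the admissible orders on $[n-2]$, which I would verify is well defined by checking the few triple types involving $n-1$ and $n$. This case contributes $g(n-2)$ orders.

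Adding the two disjoint cases yields $g(n)=g(n-1)+g(n-2)$, and checking $g(1)=1$, $g(2)=2$ completes the induction. As an independent cross-check, and to match the structural description given just before the statement, one can instead prove directly that the admissible orders are exactly those obtained from $12\ldots n$ by swapping a set of pairwise disjoint adjacent pairs: the never conditions force every inversion to occur only between consecutively labelled alternatives, since for an inversion $\{x,y\}$ with $y\ge x+2$ the triple $\{x,x+1,y\}$ would place the smallest element last or the largest first; and the same triple argument applied to $\{x,x+1,x+2\}$ forbids two overlapping swaps. Such orders correspond bijectively to matchings of the path $P_n$ on the alternatives, whose number is well known to be $F_{n+1}$. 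I expect the main obstacle to be the bookkeeping in the second case, namely rigorously establishing that the element below $n$ must be $n-1$ and that appending $n(n-1)$ preserves admissibility, since this is where the interaction between the $1N3$ and $3N1$ conditions is genuinely used; everything else is routine inheritance of never conditions and the standard Fibonacci count.
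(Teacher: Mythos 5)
Your proof is correct and takes essentially the same approach as the paper's: both partition the domain into orders ending in $n$ and orders ending in $n(n-1)$, yielding the recurrence $g(n)=g(n-1)+g(n-2)$ with base values $g(1)=1$, $g(2)=2$, hence the shifted Fibonacci sequence. Your write-up in fact supplies the details the paper leaves implicit (why $n$ must be last or second-to-last, and why in the latter case the element below $n$ is forced to be $n-1$).
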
	
\begin{proof}
The set of orders from the considered domain can partitioned into orders with last $n$ and last $nn-1$. The size of the first part is the size of the considered domain type for $n-1$ alternatives, and the size of the first part is the size of the considered domain type for $n-2$ alternatives. Thus, we have $c_n=c_{n-1}+c_{n-2}$. Starting from $c_1=1$, $c_2=2$ we get a shifted Fibonacci sequence.
\end{proof}
The asymptotic growth rate of the Fibonacci sequence is $\frac{1+\sqrt{5}}{2}\approx 1.618$.

Set-alternating schemes can be used as a straightforward way of producing random examples of maximal Condorcet domains.  For a given $n$ we can construct a set $A$ at random by including each value from $X_n$ with probability $p$, independently, and then construct a domain $D$ by using the set-alternating scheme given by $A$. For $p=\frac{1}{2}$ we get the uniform distribution for $A$. 
\begin{quest}
    What is the expected size of the domain $D$ for the set-alternating scheme of a random set $A$, when $A$ is generated with inclusion probability $p$?
\end{quest}
Since the domain size here is exponential in $n$ the expected size might deviate significantly from the median size. The same question could be asked for the median. 

\subsection{Recursive properties of the domain sizes}
First of all let us note that whether either of 1 and $n$ is a member of $A$ or not does not affect the size of the generated domain. Simply because neither of these can be the middle element of a triple. Next, let $w$ denote the largest element in $A$, and $A'=A\setminus w$. 

\begin{prop}
    If $w=n-1$, then $f_n(A)=2f_{n-1}(A')$.
\label{double}
\end{prop}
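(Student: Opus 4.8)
The plan is to prove the identity by partitioning $D_X(A)$ according to which alternative occupies the very bottom of each order, and then exhibiting two size-preserving bijections with the smaller domain $D_{[n-1]}(A')$ on $[n-1]$. Since $A'\subseteq[n-2]$, the domain $D_{[n-1]}(A')$ is well defined, and the whole argument reduces to counting.

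First I would establish the key structural fact that in every order of $D_X(A)$ the bottom-ranked alternative is either $n-1$ or $n$. This follows because $w=n-1$ is the largest element of $A$, so for each $i\leq n-2$ the triple $\{i,n-1,n\}$ has middle element $n-1\in A$ and therefore satisfies the never condition $1N3$. Thus $i$ is never ranked last within this triple, which in particular forbids $i$ from being the global bottom of the order, since that would place it below both $n-1$ and $n$ and make it last in the triple. Hence no alternative $\leq n-2$ can occupy the last position, so the bottom is $n-1$ or $n$, and the two resulting classes of orders are disjoint and exhaustive.

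Next I would treat the two classes separately and identify each with $D_{[n-1]}(A')$. For orders ending in $n$, deleting $n$ yields an order on $[n-1]$; conversely, appending $n$ at the bottom is compatible with every triple $\{i,j,n\}$, because an order with $n$ last satisfies both $1N3$ and $3N1$. Moreover the never conditions on triples lying inside $[n-1]$ are identical in $D_X(A)$ and $D_{[n-1]}(A')$, since any such triple has middle $j\leq n-2$ and $j\in A\iff j\in A'$. This makes the deletion a bijection onto $D_{[n-1]}(A')$. For orders ending in $n-1$, I would delete the bottom element $n-1$ and relabel $n\mapsto n-1$ to land on $[n-1]$; the same book-keeping on middle elements shows the relevant never conditions again coincide with those of $D_{[n-1]}(A')$.

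The main obstacle is verifying that this second map is genuinely a bijection, i.e. that its inverse (relabel $n-1\mapsto n$ and re-append $n-1$ at the bottom) always produces a valid order of $D_X(A)$. The delicate triples are the mixed ones $\{i,n-1,n\}$: here $n-1$ sits at the global bottom, hence below $i$, which guarantees $i$ is not last and so the required $1N3$ condition holds. The triples $\{i,j,n-1\}$ with $i<j\leq n-2$ also need checking, but with $n-1$ last the appropriate one of $1N3$ or $3N1$ is automatically satisfied, exactly as in the first case. Once these verifications are in place, each of the two classes contributes $f_{n-1}(A')$ orders, and therefore $f_n(A)=2f_{n-1}(A')$.
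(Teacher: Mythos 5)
Your proposal is correct and follows essentially the same route as the paper's proof: observe that the $1N3$ condition on the triples $\{i,n-1,n\}$ forces the bottom alternative to be $n-1$ or $n$, then match each of the two resulting classes bijectively with $D_{[n-1]}(A')$ by deleting/appending the bottom element. You simply spell out the details (the relabeling $n\mapsto n-1$ and the verification that all never conditions involving the appended bottom element hold automatically) that the paper leaves implicit.
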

\begin{proof}
    For all triples $i,n-1,n$, we have the never condition $1N3$. Thus, only alternatives $n-1$ and $n$ can be last in orders from $D_X(A)$. Note that adding $n$, or $n-1$ at the end of an order, which does not use that alternative, does not violate 1N3, 3N1 conditions.  Hence, the number of linear orders with $n$ last equals $f_{n-1}(A')$, and the same holds for those with $n-1$ last.
\end{proof}

\begin{prop}
    If $1<w<n-1$, then $f_n(A)=2f_{n-1}(A)+f_{n-1}(A')-S$, where $S=\sum_{j=w-1}^{n-2}f_j(A')$.
\label{general}
\end{prop}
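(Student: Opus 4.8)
The plan is to prove the formula by induction on $n$, partitioning $D_X(A)$ according to the bottom (last‑ranked) alternative of each order, exactly in the spirit of the proof of Proposition~\ref{double} but keeping track of more classes. First I would record which alternatives can occur last: reading off the never conditions, an alternative $m$ is the bottom of some order precisely when no triple $\{m,j,k\}$ with $m<j<k$ carries a $1N3$ condition, i.e. precisely when $A$ contains no element strictly between $m$ and a larger alternative. Since $w=\max A$, this shows that the possible bottom elements are exactly $\{w,w+1,\dots,n\}$. Writing $g(m)$ for the number of orders of $D_X(A)$ with bottom $m$, we have $f_n(A)=\sum_{m=w}^{n}g(m)$.

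Next I would evaluate the ``easy'' classes. Deleting $n$ from the orders with bottom $n$ is a bijection onto the domain on $[n-1]$, so $g(n)=f_{n-1}(A)$. For $w<m\le n-1$ the midpoint $m\notin A$, so each triple $\{i,m,k\}$ with $i<m<k$ forces $i$ above $k$; combined with the $3N1$ conditions inside $\{m+1,\dots,n\}$ this pins every alternative above $m$ into the shape [an admissible order of $\{1,\dots,m-1\}$], then the block $m+1,m+2,\dots,n$ in increasing order, then $m$. Hence $g(m)=f_{m-1}(A\cap[m-1])$, a quantity independent of $n$. The key observation is that the very same computation describes the bottom‑$m$ orders of the domain on $[n-1]$ for every $w<m\le n-1$, so these classes cancel when we compare $f_n(A)$ with $f_{n-1}(A)$. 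Writing $h_N$ for the number of bottom‑$w$ orders of the domain on $[N]$, this yields the reduction $f_n(A)=2f_{n-1}(A)+\bigl(h_n-h_{n-1}\bigr)$, and it remains to prove $h_n-h_{n-1}=f_{n-1}(A')-S$.

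The crux, and where I expect the real work to lie, is the analysis of the bottom‑$w$ orders. Here the midpoint $w\in A$ imposes no separation between the low block $L=\{1,\dots,w-1\}$ and the increasing high block $U=\{w+1,\dots,n\}$, so the two blocks genuinely interleave. I would first note that the restriction to $L$ ranges freely over $D_{[w-1]}(A')$ (its internal conditions use exactly the midpoints of $A$ lying below $w$, which constitute $A'$), and that the only conditions coupling $L$ and $U$ come from the triples $\{i,i',k\}$ with $i<i'<w<k$: these read $1N3$ when $i'\in A'$ and $3N1$ otherwise. To compute $h_n-h_{n-1}$ I would insert the new largest alternative $n$ into a bottom‑$w$ order on $[n-1]$; since $U$ must remain increasing, $n$ can only occupy a gap in the tail lying below $n-1$ and above $w$, and the coupling conditions force its admissible positions to form a single contiguous range whose top is controlled by the lowest $A'$‑governed alternative of $L$.

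The main obstacle is to carry out this last count exactly and sum it over all bottom‑$w$ orders so that the total excess collapses to $f_{n-1}(A')-\sum_{j=w-1}^{n-2}f_j(A')$. I expect this to need a secondary induction (or an explicit bijection) that stratifies the bottom‑$w$ orders by how far the block $U$ penetrates past the alternatives of $L$, each stratum contributing one term $f_j(A')$, so that the subtracted sum $S$ appears as the inclusion–exclusion correction accounting for the positions of $n$ that the coupling conditions forbid. Checking that this stratification produces exactly the index range $j=w-1,\dots,n-2$, and that the boundary behaviour is consistent with Proposition~\ref{double}, is the delicate step; once it is in place, everything else is routine bookkeeping on top of the induction hypothesis.
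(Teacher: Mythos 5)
Your reduction is set up correctly, and it is essentially the same bookkeeping as the paper's: you partition by the last-ranked alternative and show the possible bottoms are $w,\ldots,n$, that the bottom-$n$ class has size $f_{n-1}(A)$, and that each bottom-$m$ class with $w<m<n$ has the rigid form [order on $[m-1]$] + [$m+1,\ldots,n$ increasing] + [$m$], of size $f_{m-1}(A)$ independent of $n$ (the paper instead partitions by the position of $n$ -- last, second-to-last, or neither, in which case $w$ is last -- but the two decompositions are equivalent). This correctly reduces the proposition to the identity $h_n-h_{n-1}=f_{n-1}(A')-\sum_{j=w-1}^{n-2}f_j(A')$ for your bottom-$w$ counts $h_N$.

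The genuine gap is that you never prove this identity: you explicitly defer it as ``the main obstacle'' and ``the delicate step'' and offer only an expected stratification. But this identity is the entire content of the proposition -- it is the only place where $A'$ and the sum $S$ enter -- so what is actually established is nothing beyond $f_n(A)=2f_{n-1}(A)+(\text{nonnegative term})$, which is Corollary~\ref{lowb}, not Proposition~\ref{general}. The paper closes exactly this gap by a direct count rather than your insertion scheme: a bottom-$w$ order is precisely an order of the set-alternating domain on $[n]\setminus\{w\}$ generated by $A'$ (there are $f_{n-1}(A')$ of these) with $w$ appended at the end, subject to the extra requirement, forced by the $3N1$ conditions on triples $\{w,j,k\}$, that all alternatives above $w$ appear in increasing order. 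The orders that must be discarded -- those violating this increasing requirement, together with those ending in $n$ (which after appending $w$ have $n$ second-to-last and are accounted for separately) -- are then shown to be exactly the orders whose tail is $n$, or $(m+1)(m+2)\cdots n\,m$ for some $w<m<n$: if the alternatives above $w$ are out of order and the last alternative is $m\neq n$, then $m>w$, the $3N1$ conditions force every alternative below $m$ to precede $m+1$ and every alternative above $m$ to appear increasingly, pinning the whole tail. These tail classes are disjoint and contribute $f_{n-2}(A'),f_{n-3}(A'),\ldots,f_{w-1}(A')$ respectively, which is exactly $S$. Some structural lemma of this kind is unavoidable; your ``contiguous range of admissible insertion positions'' plan would have to prove an equivalent tail characterization, and as written your argument stops exactly where that work begins.
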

\begin{proof}
    For all triples $i<n-3<j$ we have the restriction $1N3$. Thus, only alternatives $w,\ldots,n$ can be last in $D_X(A)$.
    
    There are $f_{n-1}(A)$ orders, in which $n$ is the last alternative, and there are $f_{n-1}(A)$ orders  in which $n$ is the second last alternative. 
 
    If $n$ is neither last nor second last, then alternative $w$ is last.

    There are $f_{n-1}(A')$ orders that satisfy all conditions without alternative $w$. Adding $w$ at the end  may violate some 3N1 conditions in some of the orders. We should subtract the number of orders with $nw$ last (we calculate them when finding the number of orders with the second last $n$), the number of orders which end with $nn-1w$, the number of orders which end with $n-1nn-2w$,..., $w+2\ldots n-1nw+1w$. Thus we have $f_{n-1}(A')-\sum_{j=w-1}^{n-2}f_j(A')$ orders from $D_X(A)$  with $w$ last. 

    Summing we obtain the result. 
\end{proof}

By Proposition~\ref{general} we have  $f_n(A)\geq2f_{n-1}(A)$, giving us a general lower bound on the size of set-alternating domains.
\begin{corollary}\label{lowb}
    We have $f_n(A)\geq 2f_{n-1}(A)$ and $f_n(A)\geq2^{n-1}$.
\end{corollary}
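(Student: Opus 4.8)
The plan is to derive Corollary~\ref{lowb} directly from the two preceding propositions, Proposition~\ref{double} and Proposition~\ref{general}, which together cover every possible position of the largest element $w$ of $A$. First I would establish the inequality $f_n(A)\geq 2f_{n-1}(A)$ by a case analysis on $w$, the largest element of $A$ (recalling that membership of $1$ and $n$ in $A$ is irrelevant to the domain size, so we may assume $w\leq n-1$). If $w=n-1$, Proposition~\ref{double} gives $f_n(A)=2f_{n-1}(A')$, where $A'=A\setminus w$; but since $A'$ agrees with $A$ except at $w=n-1$, and $n-1$ is the top element of the ground set $[n-1]$ for the restricted scheme, its membership does not affect the size, so $f_{n-1}(A')=f_{n-1}(A)$ and we get exactly $f_n(A)=2f_{n-1}(A)$. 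If $1<w<n-1$, Proposition~\ref{general} gives $f_n(A)=2f_{n-1}(A)+f_{n-1}(A')-S$, so the claimed bound $f_n(A)\geq 2f_{n-1}(A)$ reduces to showing $f_{n-1}(A')\geq S=\sum_{j=w-1}^{n-2}f_j(A')$. The remaining degenerate cases (where $A$ has no element strictly between $1$ and $n$, i.e. $A\subseteq\{1,n\}$) must be handled separately, but there all triples satisfy a single never condition and the domain has size exactly $2^{n-1}$ by the result of~\cite{raynaud1981paradoxical} cited earlier.

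The one genuine inequality to verify is $f_{n-1}(A')\geq \sum_{j=w-1}^{n-2}f_j(A')$. The key observation driving this is that the subtracted quantity $S$ counts a disjoint collection of orders that are all genuinely present among the $f_{n-1}(A')$ orders on $[n-1]$: in the proof of Proposition~\ref{general}, $S$ enumerates exactly those orders whose suffix is one of $nw$, $nn\!-\!1w$, $n\!-\!1nn\!-\!2w$, and so on, and these suffix-patterns are pairwise incompatible. Hence $S$ is a sum of sizes of disjoint subfamilies of the $f_{n-1}(A')$-order domain $D_{[n-1]}(A')$, forcing $S\leq f_{n-1}(A')$. Thus $f_{n-1}(A')-S\geq 0$, and Proposition~\ref{general} yields $f_n(A)\geq 2f_{n-1}(A)$ in this case as well. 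I would present this as the conceptual heart of the argument rather than re-deriving the counting, since the disjointness is already implicit in how $S$ was constructed.

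Once $f_n(A)\geq 2f_{n-1}(A)$ holds for all $n$ and all $A$, the bound $f_n(A)\geq 2^{n-1}$ follows by a routine induction on $n$: iterating the recursive inequality gives $f_n(A)\geq 2f_{n-1}(A)\geq 2^2 f_{n-2}(A)\geq\cdots\geq 2^{n-1}f_1(A)$, and since any domain on a single alternative (or the trivial base case) has $f_1(A)=1$, we conclude $f_n(A)\geq 2^{n-1}$. I would anchor the induction at a small base value of $n$ where the domain size can be read off directly.

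The step I expect to be the main obstacle is establishing $f_{n-1}(A')\geq S$ cleanly. The inequality is intuitively clear from the disjointness of the counted suffix-families, but making this airtight requires care that the families $S$ sums over are indeed all subfamilies of the \emph{same} domain $D_{[n-1]}(A')$ and do not overlap — otherwise one could not conclude their total size is bounded by $f_{n-1}(A')$. A subtle point is that the terms $f_j(A')$ for small $j$ refer to restricted schemes on initial segments $[j]$, and one must confirm each such term genuinely corresponds to a disjoint block of full orders on $[n-1]$ (determined by fixing the suffix and letting the prefix range freely over a shorter set-alternating domain). Provided this bookkeeping is handled as in the proof of Proposition~\ref{general}, the rest is immediate.
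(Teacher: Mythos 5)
Your proposal is correct and follows essentially the same route as the paper: both derive $f_n(A)\geq 2f_{n-1}(A)$ from the recursion in Proposition~\ref{general} by arguing that $f_{n-1}(A')-\sum_{j=w-1}^{n-2}f_j(A')$ is nonnegative, and then iterate down to $f_1=1$. The only difference is cosmetic: the paper gets nonnegativity at once by observing that this quantity \emph{is} the number of orders in $D_X(A)$ with $w$ ranked last (a cardinality), whereas you re-derive the same fact via disjointness of the suffix families counted by $S$; your explicit treatment of the cases $w=n-1$ (via Proposition~\ref{double}) and $A\subseteq\{1,n\}$ is a small completeness gain over the paper's proof, which leaves those cases implicit.
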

\begin{proof}
    We have $f_n(A)=2f_{n-1}(A)+f_{n-1}(A')-\sum_{j=w-1}^{n-2}f_j(A')$, where $2f_{n-1}(A)$ is the number of orders, in which $n$ is last or second last and $f_{n-1}(A')-\sum_{j=w-1}^{n-2}f_j(A')$ is the number of orders, in which $n$ is neither last, nor second last. Thus, we have $f_{n-1}(A')-\sum_{j=w-1}^{n-2}f_j(A')\geq 0$ and $f_n(A)\geq 2f_{n-1}(A)$. Starting from $f_1=1$ we get $f_n(A)\geq2^{n-1}$.
\end{proof}

We also get some information about sets which generate large domains.
\begin{corollary}\label{incl2}
    If $n-1$ is a member of $A$ then $A_1=A\setminus n-1$
    generates a domain which is at least as large as that for $A$.
    
    If 2 is not a member of $A$ then $A_2=A\cup\{2\}$ generates a domain which is at least as large as that for $A$.   
\end{corollary}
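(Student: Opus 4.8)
The plan is to prove the two assertions separately, deriving the second from the first via the reverse-complement symmetry of Lemma~\ref{compl}. For the first assertion I would argue as follows. Since membership of $n$ in $A$ does not affect the generated domain, I may assume $n\notin A$, so that $n-1\in A$ makes $n-1$ the largest element $w$ of $A$. Writing $A'=A\setminus\{n-1\}$, Proposition~\ref{double} gives $f_n(A)=2f_{n-1}(A')$. On the other hand $A_1=A\setminus\{n-1\}=A'$, and applying Corollary~\ref{lowb} to the set $A'$ (whose largest relevant element is at most $n-2$, so that $f_{n-1}(A')$ is well defined) yields $f_n(A')\geq 2f_{n-1}(A')$. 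Combining the two gives $f_n(A_1)=f_n(A')\geq 2f_{n-1}(A')=f_n(A)$, which is exactly the first claim.

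For the second assertion I would reduce to the first by passing to reverse complements. Recall from Lemma~\ref{compl} that the operation $A\mapsto A^*=n+1-\overline{A}$ is realised on domains by reversing every order and relabelling the alternatives, both of which are bijections; hence $f_n(B)=f_n(B^*)$ for every $B$, and one checks directly that $A\mapsto A^*$ is an involution. The key translation is that $2\notin A$ is equivalent to $n-1\in A^*$ (since $2\in\overline{A}$ forces $n+1-2=n-1\in A^*$), and that for $A_2=A\cup\{2\}$ one has $\overline{A_2}=\overline{A}\setminus\{2\}$, so that $(A_2)^*=A^*\setminus\{n-1\}$. Thus the first assertion, applied to the set $A^*$ (which contains $n-1$), gives $f_n(A^*\setminus\{n-1\})\geq f_n(A^*)$, i.e.\ $f_n((A_2)^*)\geq f_n(A^*)$. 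Using $f_n(A_2)=f_n((A_2)^*)$ and $f_n(A)=f_n(A^*)$, this is precisely $f_n(A_2)\geq f_n(A)$.

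The conceptual steps---recognising that the first inequality is just Corollary~\ref{lowb} played off against the exact doubling of Proposition~\ref{double}, and that the second is the reverse-complement dual of the first---are short once spotted. I expect the only delicate point to be the bookkeeping: ensuring that the irrelevant elements $1$ and $n$ are handled cleanly (so that $w=n-1$ in the first part and $A'\subseteq[n-2]$ when invoking Corollary~\ref{lowb}), and verifying the set identity $(A_2)^*=A^*\setminus\{n-1\}$ together with the equivalence $2\notin A\Leftrightarrow n-1\in A^*$. These are routine manipulations, so no substantial analytic obstacle remains.
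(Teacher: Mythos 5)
Your proposal is correct and takes essentially the same approach as the paper: the first part plays Proposition~\ref{double} off against the doubling inequality $f_n(A')\geq 2f_{n-1}(A')$ (you cite Corollary~\ref{lowb}, while the paper re-derives the same bound directly from Proposition~\ref{general}), and the second part uses the identical reverse-complement reduction via Lemma~\ref{compl}. If anything, you are slightly more explicit than the paper in verifying the bookkeeping, namely the identity $(A_2)^*=A^*\setminus\{n-1\}$ and the equivalence $2\notin A \Leftrightarrow n-1\in A^*$, which the paper leaves implicit.
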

\begin{proof}
    Let us prove the first part. First note that $A_1=A'$.
    We have $f_n(A)=2f_{n-1}(A_1)$ and $f_n(A_1)=2f_{n-1}(A_1)+f_{n-1}(A_1')-\sum_{j=w-1}^{n-2}f_j(A_1')$. Thus, we have $f_n(A)\leq f_n(A_1)$.

    For the second part, consider the reverse complement $A^*$, which generates a domain of the same size as $A$. If 2 is not a member of $A$ then $n-1\in A^*$ and as we just showed the domain would have been at least as large if $n-1$ had not been a member of $A^*$.
\end{proof}

Hence the set of largest, for each $n$, set-alternating domains, contains domains such than $A$ contains 2 and does not contain $n-1$.

The following proposition determines the exact sizes for some special cases. 
\begin{prop} \label{prop:one_k}
    For $k=1$, we have $f_n(\{n-k\})=2^{n-1}$ and $f_n(\{k\})=2^{n-1}$. For $k>1$ and $n\leq k+1$, we have $f_n(\{n-k\})=2^{n-1}$ and $f_n(\{k\})=2^{n-1}$. For $k>1$ and $n>k+1$, we have $f_{n}(\{n-k\})=5\cdot 2^{n-3} - 2^{n-k-2}$ and $f_n(\{k\})=5\cdot 2^{n-3}-2^{k-2}$.
\end{prop}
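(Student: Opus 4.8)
The plan is to prove this by using the recursive formulas already established in Propositions~\ref{double} and~\ref{general}, treating the singleton sets $\{k\}$ and $\{n-k\}$ as base cases for the recursion. Since Lemma~\ref{compl} (reverse complement) relates $f_n(\{k\})$ and $f_n(\{n-k\})$ — note that for $A=\{k\}$ we have $\overline{A}=[n]\setminus\{k\}$ and $A^*=n+1-\overline{A}$, which after removing the irrelevant endpoints $1$ and $n$ leaves the singleton $\{n+1-k\}$ — it suffices to compute one of the two families and transfer the result. I would compute $f_n(\{w\})$ where $w=k$, so that $A'=A\setminus w=\emptyset$ and $f_j(A')=f_j(\emptyset)=2^{j-1}$ by Corollary~\ref{lowb} (the minimum size $2^{n-1}$ is attained by the empty set, since $f_n(\emptyset)=2f_{n-1}(\emptyset)$ with no correction term).

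First I would dispose of the degenerate cases. For $k=1$ the element $1$ can never be the midpoint of a triple, so it does not affect the domain and $f_n(\{1\})=f_n(\emptyset)=2^{n-1}$. More generally, whenever $n\leq k+1$ the largest element $w=k$ satisfies $w\geq n-1$, so either $w=n-1$ (apply Proposition~\ref{double} with $A'=\emptyset$, giving $2f_{n-1}(\emptyset)=2^{n-1}$) or $w\geq n$ in which case $k$ is not a genuine interior midpoint and again the domain equals $f_n(\emptyset)=2^{n-1}$. These two observations together cover all cases where the claimed value is $2^{n-1}$.

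For the main case $k>1$ and $n>k+1$ we have $1<w=k<n-1$, so Proposition~\ref{general} applies with $A'=\emptyset$:
\begin{equation*}
f_n(\{k\})=2f_{n-1}(\{k\})+f_{n-1}(\emptyset)-\sum_{j=w-1}^{n-2}f_j(\emptyset)=2f_{n-1}(\{k\})+2^{n-2}-\sum_{j=k-1}^{n-2}2^{j-1}.
\end{equation*}
The geometric sum telescopes to $\sum_{j=k-1}^{n-2}2^{j-1}=2^{n-2}-2^{k-2}$, so the correction collapses and the recursion simplifies to $f_n(\{k\})=2f_{n-1}(\{k\})+2^{k-2}$. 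This is a linear first-order recursion with constant inhomogeneous term $2^{k-2}$. The base case is $n=k+1$, where the previous paragraph gives $f_{k+1}(\{k\})=2^{k}$ (here $w=k=n-1$). I would then solve the recursion by writing $f_n(\{k\})=2^{n-k-1}f_{k+1}(\{k\})+2^{k-2}\sum_{i=0}^{n-k-2}2^{i}$; the homogeneous part contributes $2^{n-k-1}\cdot 2^{k}=2^{n-1}$ and the particular part contributes $2^{k-2}(2^{n-k-1}-1)=2^{n-3}-2^{k-2}$, summing to $2^{n-1}+2^{n-3}-2^{k-2}=5\cdot 2^{n-3}-2^{k-2}$, exactly as claimed.

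The argument is essentially a clean induction, so there is no deep obstacle; the only place demanding care is verifying that the geometric sum in Proposition~\ref{general} cancels precisely against the $f_{n-1}(\emptyset)$ term, leaving a constant rather than a growing correction — this is what makes the closed form tractable. The other point worth stating explicitly is the transfer to $f_n(\{n-k\})$ via Lemma~\ref{compl}: one checks that the reverse complement of $\{k\}$ is $\{n+1-k\}$, and since the value $5\cdot2^{n-3}-2^{k-2}$ depends on $k$ only through the $2^{k-2}$ term, substituting $k\mapsto n-k$ yields $5\cdot2^{n-3}-2^{n-k-2}$, matching the stated formula for $f_n(\{n-k\})$.
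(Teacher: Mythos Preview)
Your computation of $f_n(\{k\})$ via Proposition~\ref{general} is correct and is exactly what the paper does for that half of the statement: the geometric sum collapses to leave the recursion $f_n(\{k\})=2f_{n-1}(\{k\})+2^{k-2}$, which you solve cleanly from the base $f_{k+1}(\{k\})=2^k$.

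There is, however, a genuine error in your transfer step. The reverse complement of $\{k\}$ is \emph{not} a singleton. By definition $A^*=n+1-\overline{A}$, and for $A=\{k\}$ the complement $\overline{A}=[n]\setminus\{k\}$ has $n-1$ elements, so $A^*=[n]\setminus\{n+1-k\}$, an $(n-1)$-element set. Even after discarding the endpoints $1$ and $n$ this leaves $\{2,\ldots,n-1\}\setminus\{n+1-k\}$, not a singleton. Lemma~\ref{compl} therefore gives $f_n(\{k\})=f_n([n]\setminus\{n+1-k\})$, which is not what you need; indeed the formulas show $f_n(\{k\})\neq f_n(\{n+1-k\})$ in general. (You also silently slide from $n+1-k$ to $n-k$ in the substitution.)

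The good news is that Lemma~\ref{compl} is unnecessary here. Your argument proves $f_n(\{m\})=5\cdot 2^{n-3}-2^{m-2}$ for \emph{every} $m$ with $1<m<n-1$, so you may simply take $m=n-k$: the hypotheses $k>1$ and $n>k+1$ are exactly $1<n-k<n-1$, and the substitution yields $5\cdot 2^{n-3}-2^{n-k-2}$ directly. The paper instead derives $f_n(\{n-k\})$ by a second application of Proposition~\ref{general}, obtaining the mixed recursion $f_n(\{n-k\})=2f_{n-1}(\{(n-1)-(k-1)\})+2^{n-k-2}$ and unwinding it in $k$; your route via a single generic computation is shorter, provided you justify the substitution correctly.
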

\begin{proof}
For $k=1$, $f_n(\{n-k\})=f_n(\{k\})=2^{n-1}$ follows from Proposition~\ref{double}.

For $k>1$ and $n\leq k+1$, we have $f_n(\{n-k\})=f_n(\{k\})=2^{n-1}$ because it is the size of any Arrow's single-peaked domain.

From Proposition~\ref{general} for $k>1$ and $n>k+1$ we have $f_n(\{n-k\})=2f_{n-1}(\{n-k\}) + 2^{n-2}-\sum_{j=n-k-1}^{n-2}2^{j-1}$. Rearranging, we have
$f_{n}(\{n-k\})=2f_{n-1}(\{n-1-(k-1)\})+2^{n-k-2}$.

For $k=2$, we have $f_{n}(\{n-2\})=2^{n-1}+2^{n-4}$. For $k=3$, we have $f_{n}(\{n-3\})=2(2^{n-2}+2^{n-5})+2^{n-5}=2^{n-1}+3\cdot 2^{n-5}$. For $k=4$, we have $f_{n}(\{n-4\})=2(2^{n-2}+3\cdot 2^{n-6})+2^{n-6}=2^{n-1}+7 \cdot 2^{n-6}$. Solvingthe  recurrence, we have $f_{n}(\{n-k\})=2^{n-1}+(2^{k-1}-1) \cdot 2^{n-k-2}$.

From Proposition~\ref{general} for $n>k+1$ we have $f_n(\{k\})=2f_{n-1}(\{k\}) + 2^{n-2}-\sum_{j=k-1}^{n-2}2^{j-1}$. Having initial condition $f_{k+1}(\{k\})=2^{k}$ we get the solution $f_n(\{k\})=5\cdot 2^{n-3}-2^{k-2}$.
\end{proof}

Utilizing Proposition~\ref{general} once more one can find the size of the set-alternating domains with $|A|=2$. Iteratively applying Proposition~\ref{general} one can in principle find the size of any set-alternating domain. Another special case is solved in the following proposition.
\begin{prop} \label{prop:half_A1}
    For even $n\geq4$, we have $f_{n}(\{2,3,\ldots ,\frac{n}{2}\})= 2^{n-1}+\sum_{i=1}^{\frac{n}{2}-1}\sum_{j=1}^{\frac{n}{2}-1}\binom{i+j-2}{i-1}2^{n-i-j-2}$.
\end{prop}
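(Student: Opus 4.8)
The plan is to apply Proposition~\ref{general} repeatedly, peeling off the largest element of $A$ each time, and to track how the resulting sums telescope into the double-binomial sum claimed. Write $A=\{2,3,\ldots,m\}$ with $m=\frac{n}{2}$, so that the largest element is $w=m$ and $A'=\{2,\ldots,m-1\}$. The key structural observation is that the set $\{2,3,\ldots,m\}$ is a contiguous initial block, so each application of Proposition~\ref{general} produces a correction term $S=\sum_{j=w-1}^{n-2}f_j(A')$ where $A'$ is again a contiguous block of the same shape but one shorter. This self-similarity is what should make the recursion tractable: the sizes $f_j(\{2,\ldots,m-1\})$ appearing inside $S$ are themselves instances of the family we are computing, but on fewer alternatives.

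First I would set up a two-parameter family, say $g(n,m)=f_n(\{2,\ldots,m\})$, and use Proposition~\ref{general} to derive the recurrence
\[
g(n,m)=2g(n-1,m)+g(n-1,m-1)-\sum_{j=m-1}^{n-2}g(j,m-1),
\]
valid in the range where $m<n-1$, together with the boundary values supplied by Corollary~\ref{lowb} and Proposition~\ref{prop:one_k} (e.g.\ $g(n,2)=2^{n-1}+2^{n-4}$ from the $k=n-2$ case, and $g(n,m)=2^{n-1}$ whenever $m$ exhausts the available midpoints). The strategy is then to guess that the ``excess'' $E(n,m)=g(n,m)-2^{n-1}$ has a clean closed form and to prove the stated formula by induction on $n$ (or on $m$). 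Concretely, I would verify that
\[
E(n,\tfrac{n}{2})=\sum_{i=1}^{\frac{n}{2}-1}\sum_{j=1}^{\frac{n}{2}-1}\binom{i+j-2}{i-1}2^{n-i-j-2}
\]
satisfies the recurrence above, using the boundary/base cases to anchor the induction. The binomial coefficients $\binom{i+j-2}{i-1}$ strongly suggest a lattice-path interpretation: each such term counts monotone paths, and the recursion $\binom{i+j-2}{i-1}=\binom{i+j-3}{i-2}+\binom{i+j-3}{i-1}$ (Pascal's rule) should be exactly what is needed to match the two-term recursive structure of $g$.

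The main obstacle I anticipate is handling the sum $S=\sum_{j=m-1}^{n-2}g(j,m-1)$ cleanly. Because the summation range depends on both $n$ and $m$, and because the summand is itself a nontrivial closed form, the subtraction of $S$ will not telescope automatically; I expect to need a separate summation identity showing that
\[
g(n-1,m-1)-\sum_{j=m-1}^{n-2}g(j,m-1)
\]
collapses to a single ``boundary layer'' of binomial terms (those with $i=1$ or $j=1$ in the double sum), which then combine with the $2g(n-1,m)$ term to reproduce the full double sum at level $n$. Verifying this collapse is where the Vandermonde-type or hockey-stick binomial identities will do the real work, and it is the step most prone to index bookkeeping errors. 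I would therefore first check the formula numerically against Table~\ref{tab:size} and small hand computations (e.g.\ $n=4$ should give $9$ and $n=6$ should give $42$) to fix all the index conventions before committing to the inductive verification.
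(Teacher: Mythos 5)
Your route—iterating Proposition~\ref{general} and verifying a guessed closed form by induction—is genuinely different from the paper's proof, and it can in fact be made to work, but as written it has a real gap: the induction you describe cannot even be set up. The recurrence
\[
g(n,m)=2g(n-1,m)+g(n-1,m-1)-\sum_{j=m-1}^{n-2}g(j,m-1)
\]
expresses the diagonal value $g(n,\tfrac{n}{2})$ in terms of \emph{off-diagonal} values $g(n-1,\tfrac{n}{2})$, $g(n-1,\tfrac{n}{2}-1)$ and $g(j,\tfrac{n}{2}-1)$, so ``verifying that $E(n,\tfrac{n}{2})$ satisfies the recurrence'' is not a meaningful step: the induction hypothesis must be a closed form for $E(n,m)$ at \emph{all} pairs $(n,m)$, for instance
\[
E(n,m)=\sum_{i=1}^{m-1}\sum_{j=1}^{n-m-1}\binom{i+j-2}{i-1}2^{n-i-j-2},
\]
which specialises to the stated formula at $m=\tfrac{n}{2}$. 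You gesture at a two-parameter closed form but never write one down, and the binomial-identity verification that you yourself call ``the real work'' is never carried out; with the form above it does go through (Pascal's rule plus the hockey-stick identity $\sum_{j=1}^{t}\binom{i+j-2}{i-1}=\binom{i+t-1}{i}$, together with geometric sums absorbing the $\sum_j 2^{j-1}$ term), but until that computation is done the statement is unproven. There is also a concrete error in your anchor: $g(n,2)=f_n(\{2\})$, which by Proposition~\ref{prop:one_k} equals $5\cdot 2^{n-3}-1=2^{n-1}+2^{n-3}-1$, not $2^{n-1}+2^{n-4}$; you have quoted the value of $f_n(\{n-2\})$ instead. (Reassuringly, the two-parameter formula above does reproduce $2^{n-1}+2^{n-3}-1$ at $m=2$, so the error is in your proposal, not in the target.)

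For comparison, the paper does not touch Proposition~\ref{general} at all. It proves the formula by a direct structural count: every order of $D_X(\{2,\ldots,\tfrac{n}{2}\})$ decomposes as a concatenation $a+b+c$, where $b$ is a shuffle of a block of $i$ consecutive alternatives from the first half with a block of $j$ consecutive alternatives from the second half, and $a$, $c$ lie in smaller set-alternating domains; the coefficient $\binom{i+j-2}{i-1}$ is exactly the number of shuffles of an $i$-chain and a $j$-chain with prescribed first and last elements, the factor $2^{n-i-j-2}$ counts the choices of $a$ and $c$, and the fact that such shuffle-structured domains are maximal Condorcet domains is cited from \cite{karpov2023constructing}. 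That argument yields the identity in one pass, whereas your approach trades the structural insight for a heavier but self-contained induction; if you want to salvage it, state and prove the two-parameter formula displayed above, checking it against Proposition~\ref{prop:one_k} at $m=2$ and against $g(n,m)=2^{n-1}$ when $m\geq n-1$.
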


\begin{proof}
The shuffle of orders $a\in L(A)$, $b\in L(B)$, where the sets $A$ and $B$ are disjoint, is an order $c\in L(A\cap B)$ such that the restriction of $c$ to $A$ equals $a$, and the  restriction of $c$ to $B$ equals $b$.

The domain $D_X(\{2,3,\ldots \frac{n}{2}\})$ contains orders, that have the structure $a+b+c$, where $+$ is concatenation $a\in D_{\{1,2,3,\ldots \frac{n}{2}-i-1\}}(\{2,3,\ldots \frac{n}{2}-i\})$, $b$ is a shuffle of orders $\frac{n}{2}-i+1\ldots \frac{n}{2}$ and $\frac{n}{2}+1\ldots \frac{n}{2}+j$, $c\in D_{\{\frac{n}{2}+j+1\ldots n-1\}}(\{\frac{n}{2}+j+1\ldots n\})$. This domain satisfies all never conditions that follow from the definition of the set-alternating domain. \cite{karpov2023constructing} proved that domains with this structure are a maximal Condorcet domains.

The number of orders in the shuffle of two orders with $i$ and $j$ alternatives is $\binom{i+j}{i}$. If the  first alternative belongs to the second order and the last alternative belongs to the first order, then we have $\binom{i+j-2}{i-1}$.

There are $2^{n-i-j-2}\binom{i+j-2}{i-1}$ linear orders which  have the described above structure and cannot be described by a smaller shuffle.

Thus, we have $$f_{n}(\{2,3,\ldots ,\frac{n}{2}\})= 2^{n-2}+\binom{n-2}{\frac{n}{2}-1}+2\sum_{i=1}^{\frac{n}{2}-1}\binom{i+\frac{n}{2}-2}{i-1}2^{\frac{n}{2}-i-1}+\sum_{i=1}^{\frac{n}{2}-1}\sum_{j=1}^{\frac{n}{2}-1}\binom{i+j-2}{i-1}2^{n-i-j-2}.$$ Simplifying, we obtain the result. \end{proof}

For $n=2,4,6,8,\ldots$ we have 2,9,42,194,884,3978,... The asymptotic growth rate of this sequence is 4, as a function of $n/2$. Including odd $n$ we get asymptotic growth rate 2, as a function of $n$.

Proposition~\ref{double} leads to the following proposition.
\begin{prop} 
    For $1<k<n$, we have $f_{n}(\{k,\ldots, n-1\})= 2^{n-1}$.
\end{prop}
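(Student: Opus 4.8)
The plan is to prove the statement by iterating Proposition~\ref{double}, which applies at every stage precisely because the set $\{k,\ldots,n-1\}$ always contains $n-1$ as its largest element. First I would observe that for $A=\{k,\ldots,n-1\}$ the largest element is $w=n-1$, so the hypothesis $w=n-1$ of Proposition~\ref{double} is satisfied. Applying it gives $f_n(\{k,\ldots,n-1\})=2f_{n-1}(\{k,\ldots,n-2\})$, where $A'=A\setminus(n-1)=\{k,\ldots,n-2\}$. Crucially, the set $\{k,\ldots,n-2\}$ has exactly the same shape relative to $n-1$ alternatives as the original set had relative to $n$: its largest element is $(n-1)-1$. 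Hence Proposition~\ref{double} applies again, and the reduction continues down the chain $\{k,\ldots,n-2\},\{k,\ldots,n-3\},\ldots,\{k\}$.

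Second, I would formalise this as an induction on $n$ for fixed $k$ (equivalently, as a telescoping product). The base case is $n=k+1$, where $A=\{k,\ldots,n-1\}=\{k\}$; by Proposition~\ref{prop:one_k} (the case $n\le k+1$) we have $f_{k+1}(\{k\})=2^{k}=2^{(k+1)-1}$. For the inductive step, assuming $f_{n-1}(\{k,\ldots,n-2\})=2^{n-2}$, Proposition~\ref{double} yields $f_n(\{k,\ldots,n-1\})=2\cdot 2^{n-2}=2^{n-1}$, completing the induction. Unrolling the recurrence all the way down gives the explicit chain
$$f_n(\{k,\ldots,n-1\})=2^{\,n-k-1}f_{k+1}(\{k\})=2^{\,n-k-1}\cdot 2^{k}=2^{n-1}.$$

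There is no genuine obstacle here: the only points requiring attention are checking that the hypothesis of Proposition~\ref{double} holds at each stage (it does, since $n-1$ is always the maximum of the current set) and that the base case is correctly identified as the single-element set $\{k\}$, whose value $2^{k}$ is supplied by Proposition~\ref{prop:one_k}. The index count in the telescoping product, namely $n-k-1$ applications of the doubling step, should be verified but is entirely routine.
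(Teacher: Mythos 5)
Your proof is correct and takes the same route as the paper: the paper offers no written-out argument at all, simply noting that Proposition~\ref{double} "leads to" the result, and your iteration of that doubling step (with the base case $f_{k+1}(\{k\})=2^k$ supplied by Proposition~\ref{prop:one_k}) is exactly the intended argument made explicit.
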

In the considered cases the set $A$ consists of successive alternatives. All these cases are associated with relatively small domain sizes. Sets $A$ with a large number of alternations $i\in A,i+1\in \overline{A}$ possibly leads to a greater size. In the next section we will analyse one such case.  

\section{The asymptotic growth  rate of the set-alternating domains}
\label{sec:asymp}

Here we analyze even set-alternating domain $D_X(A_n)$ and its size $a(n)$. In proofs we use $A_n=\{1,2,4,6,\ldots ,n-2+p_n\}$, $p_n=( n \bmod 2)$. Adding 1 to set $A_n$ does not change the definition, but universalizes notation. From now on we let $a(n)$ denote the size of the domain resulting from the even $1N33N1$-alternating scheme. 

We partition all orders in  $D_X(A_n)$ on orders that start from set $\{1,2\}$, orders that start from set $\{1,2,3,4\}$, but not from set $\{1,2\}$,
orders that start from set $\{1,2,3,4,5,6\}$, but not from set $\{1,2,3,4\}$, etc. Orders from part $k$ start from $[2k]$, but not from $[2(k-1)]$.

Orders from the first part start from 12, 21. There are $2a(n-2)$ such orders.
Orders from the second part start from 1324, 1342, 3124, 3142, 3412. There are $5a(n-4)$ such orders. For $n=6$, 18 orders from the upper side of Figure~\ref{fig:median_graph} constitute the first part, 10 orders from the bottom left side of Figure~\ref{fig:median_graph} constitute the second part, 14 orders from the bottom right side of Figure~\ref{fig:median_graph} constitute the third part.

\begin{lemma}
    In all orders from $k^{th}$ part of $D_X(A_n)$ alternatives from $A_{2k}$ are in ascending order.
    \label{lem:2}
\end{lemma}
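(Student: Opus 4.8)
The plan is to prove the equivalent statement that consecutive elements of $A_{2k}$ occur in the correct relative order, and then chain these together by transitivity. Writing $A_{2k}=\{1,2,4,6,\ldots,2k-2\}$, its consecutive pairs are $(1,2)$ and $(2m,2m+2)$ for $1\le m\le k-2$, so it suffices to show that in every order of the $k$th part the smaller element of each such pair precedes the larger. I would argue by contradiction: assuming that the larger element of a consecutive pair precedes the smaller one, I will show that the order must in fact start from a strictly shorter initial block $[2j]$ with $j<k$, contradicting the defining property of the $k$th part (whose orders start from $[2k]$ but from no shorter $[2j]$).

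The core is a forcing argument that turns a single inverted pair into a global block structure, using only the never conditions $1N3$ (even midpoint) and $3N1$ (odd midpoint). Suppose $2m+2$ precedes $2m$. First, the triple $\{2m,2m+1,2m+2\}$ has odd midpoint $2m+1$, hence condition $3N1$, so its largest element $2m+2$ is never first; since $2m+2$ is before $2m$, the element $2m+1$ must precede $2m+2$. Next, for every $c>2m+2$ the triple $\{2m,2m+2,c\}$ has even midpoint $2m+2$, hence $1N3$, so its smallest element $2m$ is never last; as $2m+2$ already precedes $2m$, we must have $2m$ before $c$, so $2m$ precedes every element larger than $2m+2$. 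Finally, for every $x<2m$ the triple $\{x,2m,c\}$ with $c>2m+2$ has even midpoint $2m$, hence $1N3$, so its smallest element $x$ is never last: if $c$ preceded $x$, then combined with $2m$ before $c$ this would force $2m$ before $x$ too, putting both $2m$ and $c$ ahead of $x$ and violating $1N3$; hence $x$ precedes $c$. Collecting these facts, every element of $\{1,\ldots,2m+2\}$ precedes every element of $\{2m+3,\ldots,n\}$, so the first $2m+2$ positions of the order are exactly $[2m+2]$. Since $m+1<k$, the order starts from $[2(m+1)]$, contradicting that it belongs to the $k$th part. The pair $(1,2)$ is handled by the same idea in degenerate form: if $2$ precedes $1$, then $1N3$ on each triple $\{1,2,c\}$ with $c\ge 3$ forces $1$ before $c$, so the first two positions are exactly $\{1,2\}=[2]$, again placing the order in part $1$ rather than part $k\ge 2$.

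I expect the main obstacle to be exactly the middle step above: recognising that a purely local inversion of one consecutive $A$-pair propagates, via the $1N3$ conditions with an arbitrary large third coordinate $c$, into the statement that $[2m+2]$ forms an initial segment of the order. Once this block structure is in hand the contradiction with the part decomposition is immediate, and the reduction to consecutive pairs is routine. A minor point to state carefully is that belonging to the $k$th part means $2k$ is the shortest initial block of the form $[2j]$, so that exhibiting any shorter such block is genuinely contradictory; all the never conditions used come directly from the set-alternating scheme generating $A_n$, with the elements beyond $2m+2$ entering only as the third coordinates $c$.
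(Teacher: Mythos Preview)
Your proof is correct and follows the same overall strategy as the paper: assume an inverted pair inside $A_{2k}$, force the order to begin with a strictly shorter block $[2j]$ with $j<k$, and contradict membership in the $k$th part.

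The execution differs in two useful ways. First, you reduce immediately to \emph{consecutive} pairs of $A_{2k}$ and handle $(1,2)$ separately; the paper instead picks a general inverted pair $(i,j)$ with a ``closest pair'' minimality condition and shows the block $[i]$ appears. Your reduction replaces that minimality device by transitivity, which is cleaner. Second, your block argument is driven by an \emph{external} third coordinate $c>2m+2$: the $1N3$ condition on $\{2m,2m+2,c\}$ pushes $2m$ ahead of every such $c$, and then the $1N3$ condition on $\{x,2m,c\}$ pulls every $x<2m$ ahead of $c$ as well. The paper works more internally, analysing which elements can lie before $i$, between $i$ and $j$, and after $j$. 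Both routes yield the same initial-segment conclusion; yours is a little shorter and avoids the mild notational overloading in the paper's version.
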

\begin{proof}

We will proof the statement by contradiction.

Suppose there is a pair of elements $i>j$ from $A_{2k}$ and an order $q$ from part $k$ of $D_X(A_n)$ such that $i,j$ are in descending order in $q$ and there is no $x\in A_{2k}$, $j<x<i$ such that $i,x$ are in descending order in $q$. We take the closest pair in descending order. Note that $i$ is even and $i<2k$.

Now we will prove that all alternatives $k\leq i$, $k\neq j$ precede $j$ in order $q$.

Suppose, that there is element $x\in[2k]$ between $i$ and $j$ in order $q$. If $j<x<i$, then this triple violates 1N3 and 3N1. If $x>i$, then $i,j,x$ violates 1N3. Thus, the only possibility is $x<j$.

If there is $y>i$ such that it precedes $i$ in order $q$, then  $i,j,y$ violates 1N3. Thus only elements below $i$ precede $i$ in order $q$.

If there is $z<j$ that stays after $j$ in order $q$, then $i,j,z$ violates 1N3. If there is $z$, $j<z<i$ that stays after $j$ in order $q$, then $z\in A_{2k}$, otherwise triple violates $3N1$. In this case pair $z,i$ is in descending order and closer, than pair $j,i$. There is no such $z$.

Thus, all alternatives $k\leq i$, $k\neq j$ precede $j$ in order $q$, and order $q$ belongs to part $i/2$, or lower. Because $2k\notin A_{2k}$ we have $i/2<k$. We get a contradiction.
\end{proof}

\begin{lemma}
    In all orders from $k^{th}$ part of $D_X(A_n)$ alternatives from $\overline{A_{2k}}$ are in ascending order.
\end{lemma}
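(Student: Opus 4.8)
The plan is to mirror the structure of the proof of Lemma~\ref{lem:2}, running a proof by contradiction that exploits the $1N3$ and $3N1$ never conditions, but now tracking the complementary set $\overline{A_{2k}}$. Recall that, by the even $1N33N1$-alternating scheme, a triple $i<j<k$ satisfies $1N3$ when its median $j\in A_n$ and $3N1$ when $j\notin A_n$. The key asymmetry I would use is that elements of $\overline{A_{2k}}$ are precisely the odd indices in $[2k]$ (together with the even index $2k$ itself, since $2k\notin A_{2k}$), so for a triple whose median lies in $\overline{A_{2k}}$ we get a $3N1$ condition. The goal is to show that any two elements of $\overline{A_{2k}}$ appear in ascending order in every order $q$ from part $k$.

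First I would suppose, for contradiction, that some order $q$ from part $k$ has a pair $i>j$ with $i,j\in\overline{A_{2k}}$ appearing in descending order, and I would choose the closest such inverted pair, meaning there is no $x\in\overline{A_{2k}}$ with $j<x<i$ that is in descending order with $i$ in $q$. Note $i$ is odd (or $i=2k$) and $j<i$. Next I would analyze which alternatives from $[2k]$ can lie between $i$ and $j$, and which can precede or follow them, using the three never conditions violated in the dual way to Lemma~\ref{lem:2}: a misplaced element $x$ strictly between $j$ and $i$ in value, or above $i$, or below $j$, each forces a violation of either $3N1$ or $1N3$ on the appropriate triple. The aim of this case analysis is to conclude that all alternatives below $i$ (other than $j$) must precede $j$ in $q$, which would force $q$ into a lower part than $k$, contradicting the choice of $q$.

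The main obstacle, and the step requiring the most care, is getting the never-condition bookkeeping exactly right, since here the roles of $1N3$ and $3N1$ are swapped relative to Lemma~\ref{lem:2}: in that lemma the inverted pair came from $A_{2k}$ and the operative violations were of $1N3$, whereas now the inverted pair is in $\overline{A_{2k}}$ and I must be careful about which median each auxiliary triple $\{i,j,x\}$ has, since the never condition assigned depends on whether the median index lies in $A_n$. In particular I would need to verify that an intermediate element $z$ with $j<z<i$ that stays after $j$ must itself lie in $\overline{A_{2k}}$ (otherwise its triple with $i,j$ violates the assigned condition), giving a strictly closer inverted pair and contradicting minimality. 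A clean alternative, which I would mention to shorten the argument, is to invoke the reverse-complement symmetry of Lemma~\ref{compl}: since $A_{2k}$ and $\overline{A_{2k}}$ are interchanged (up to reversal of names) under the reverse-complement operation, the ascending-order property for $\overline{A_{2k}}$ follows formally from the property for $A_{2k}$ already established in Lemma~\ref{lem:2}, after checking that this symmetry maps part $k$ to itself.
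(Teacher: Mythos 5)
Your primary route (mirroring the proof of Lemma~\ref{lem:2} with a closest inverted pair) has a genuine gap: the case analysis does not survive the swap of never conditions. Take $i>j$, both in $\overline{A_{2k}}$, with $i$ before $j$ in $q$, and consider an element $x>i$ lying positionally between $i$ and $j$. The relevant triple is $\{j,i,x\}$, whose median is $i\notin A_n$, so the assigned condition is $3N1$: $x$ must not be ranked first among the three. But the restriction reads $i,x,j$, so $3N1$ is satisfied and nothing is violated. (In Lemma~\ref{lem:2} the same configuration was excluded precisely because there the median lay in $A_n$, giving $1N3$, which having $j$ last violates.) This is not a repairable bookkeeping slip: the configuration genuinely occurs in the domain, e.g.\ the order $125634\in D_X(A_6)$ (visible in Figure~\ref{fig:median_graph}) has $5$ before $6$ before $3$. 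For the same reason your intended conclusion fails: elements of $(j,i)\cap A_{2k}$ may legitimately follow $j$ (in $125634$, alternative $4$ follows $3$), so it is not true that all alternatives below $i$ precede $j$; the prefix of $q$ ending at $j$ contains elements larger than $i$ while missing elements smaller than $i$, hence is never an initial segment $[2r]$, and no lower-part contradiction is reached. Your specific sub-claim is also backwards: an intermediate $z$ with $j<z<i$ staying after $j$ is excluded outright when $z\in\overline{A_{2k}}$ (its triple with $i,j$ violates $3N1$), so what remains possible is $z\in A_{2k}$, and then the inverted pair $(i,z)$ is mixed and minimality over pairs from $\overline{A_{2k}}$ gives nothing. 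The paper's actual proof uses a different mechanism that is essentially global: from Lemma~\ref{lem:2} and part-$k$ membership (no proper prefix of the order equals $[2j]$) it proves by induction that $2j+1$ precedes $2j$ for every $j\in[k-1]$, and then a single triple $\{b-1,b,a\}$ with its $3N1$ condition yields the contradiction. That use of the part structure is exactly what a purely local mirror of Lemma~\ref{lem:2} cannot replicate.

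Your fallback via Lemma~\ref{compl} is also gapped as stated: the reverse-complement operation reverses all orders, so it exchanges tops with bottoms. For even $n$ (where $A_n^*=A_n$) the image of an order of part $k$ is an order whose \emph{last} $2k$ alternatives are $\{n-2k+1,\dots,n\}$, equivalently whose first $n-2k$ alternatives are $[n-2k]$; its part is unrelated to $k$. So the check you defer, that the symmetry ``maps part $k$ to itself'', is false at the level of $D_X(A_n)$ (and for odd $n$ the set $A_n$ is not even self-reverse-complementary). The idea can be repaired: restrict $q$ to its top $2k$ alternatives, note that within $[2k]$ the set $A_{2k}$ is its own reverse complement and that prefix conditions $[2j]$ turn into prefix conditions $[2(k-j)]$ under this restricted symmetry, then show the reversed-renamed segment extends (e.g.\ by appending $2k+1,\dots,n$ in ascending order) to a member of part $k$ of $D_X(A_n)$, to which Lemma~\ref{lem:2} applies. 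That repaired argument would be a legitimately different, and arguably cleaner, proof than the paper's, but the restriction-and-extension step is precisely the content your sketch leaves unverified.
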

\begin{proof}

Because of lemma~\ref{lem:2} elements from $A_{2k}$ are in ascending order in each order from part $k$.

Let us prove by induction that for each $j\in[k-1]$ element $2j+1$ precede $2j$ in each order from part $k$.

For $j=1$, elements $i\in\overline{A_{2k}}$ precede 2 in each order from part $k$. We have $i=3$, otherwise triple $2,3,i$ violates $3N1$. By induction hypothesis for each $l<j$, element $2l+1$ precedes $2l$ in each order from part $k$. let us prove for $j$. All $i<2j$ precede $2j$ in each order from part $k$. Elements $i\in\overline{A_{2k}}$, $i>2j$ precede $2j$ in each order from part $k$. We have $i=2j+1$, otherwise triple $2j,2j+1,i$ violates $3N1$.

Suppose there is a pair of elements $a>b$ from $\overline{A_{2k}}$ such that $a,b$ are in descending order in order $q$ from part $k$ of $D_X(A)$. Since $b$ precede $b-1$ in order $q$, triple $b-1,b,a$ violates $3N1$. We get a contradiction.
\end{proof}

\begin{definition}
    A sequence $a_1a_2\ldots a_{2k}$ of $k$ elements $u$ and $k$ elements $d$ such that for all $1\leq j \leq 2k$ we have $|i\in[j]|a_i=u\}|\geq|i\in[j]|a_i=d\}|$  is a \emph{Dyck word}.
\end{definition}

\begin{prop} \citep{Deutsch}
The number of Dyck words of size $2k$ is $C_k$, where $C_k$ is the $k^{th}$ Catalan number. \end{prop}

Let us define a bijection $\mu$ between with Dyck words of length $2(k+1)$ and the top $2k$ elements segment of orders in $k^{th}$ part of $D_X(A_n)$.

\begin{definition}
   Our bijection produces Dyck words as follows.
   
   The first element in Dyck words is $u$. It has no correspondence in orders. Each consequent element in the top $2k$ elements segment of an order from $k^{th}$ part of $D_X(A_n)$ corresponds with the consequent element in Dyck word: if the element belongs to $A_{2k}$ then $d$, if not, then $u$. The last element in Dyck word is $d$. It has no correspondence in orders. The length of the constructed Dyck word is $2(k+1)$.
\end{definition}

\begin{table}[ht]
\caption{An example}
    \label{tab:dyck}
    \centering
    \begin{tabular}{cc}
    \toprule

    Top 4 elements & Dyck word \\
    \midrule
        1324 & $ududud$ \\
        1342 & $uduudd$ \\
        3124 & $uuddud$ \\
        3142 & $uududd$ \\
        3412 & $uuuddd$ \\
    \bottomrule
    \end{tabular}
    
\end{table}

Table~\ref{tab:dyck} presents the bijection applied the top four elements of the second part of  $D_X(A_n)$. The following lemmas prove that the bijection is well-defined.

\begin{lemma}
    The image of each Dyck word of length $2(k+1)$ is a top $2k$ elements segment of an order from $k^{th}$ part of $D_X(A_n)$. 
\end{lemma}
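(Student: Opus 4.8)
The plan is to show that decoding any Dyck word $w$ of length $2(k+1)$ yields a genuine top segment. Removing the forced initial $u$ and terminal $d$ from $w$ leaves a word $s_1\cdots s_{2k}$ with exactly $k$ letters $u$ and $k$ letters $d$, and the Dyck property becomes the ballot inequality $h_i := |\{\,l\le i: s_l=u\}| - |\{\,l\le i: s_l=d\}|\ge -1$ for every prefix length $i$. I would then read off the candidate arrangement $\pi$ of $[2k]$ by scanning $s_1\cdots s_{2k}$, placing the next-smallest unused element of $A_{2k}$ at each letter $d$ and the next-smallest unused element of $\overline{A_{2k}}$ at each letter $u$; by the two preceding lemmas this is exactly the form forced on any top segment of a part-$k$ order, so $\pi$ is the unique arrangement $w$ could encode. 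It then remains to check three things: that $\pi$ respects all never conditions inside $[2k]$, that $\pi$ really lies in part $k$, and that $\pi$ extends to a full order of $D_X(A_n)$.

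First I would reduce the within-$[2k]$ never conditions to a single family of precedences. Fix a triple $a<b<c$ in $[2k]$; the scheme imposes $1N3$ when $b\in A_{2k}$ (so $a$ is never last) and $3N1$ when $b\in\overline{A_{2k}}$ (so $c$ is never first). Running through the cases for the classes of the endpoints $a$ and $c$, every case in which two of the three elements lie in the same class is automatically satisfied, because within each class $\pi$ is increasing. The only case that can fail is $a\in\overline{A_{2k}}$ and $c\in A_{2k}$, where both conditions collapse to the single requirement that $a$ precede $c$ in $\pi$. Hence it suffices to prove: for all $a\in\overline{A_{2k}}$ and $c\in A_{2k}$ with $a<c$, the element $a$ precedes $c$.

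The heart of the argument is to derive this precedence from the ballot inequality, and this is the step I expect to be the main obstacle, since it is where the global Dyck condition must be converted into a local pairwise statement and it requires matching the indexing of $A_{2k}=\{1,2,4,\dots,2k-2\}$ and $\overline{A_{2k}}=\{3,5,\dots,2k-1,2k\}$ to the letters of $s$. Writing $a$ as the $p$-th smallest element of $\overline{A_{2k}}$ and $c$ as the $(r+1)$-th smallest element of $A_{2k}$, the inequality $a<c$ forces $p\le r-1$. Since $c$ is placed at the $(r+1)$-th letter $d$, applying the ballot inequality at that position shows there are at least $(r+1)-1=r\ge p$ letters $u$ among the preceding symbols; therefore the $p$-th letter $u$, which carries $a$, occurs before the $(r+1)$-th letter $d$, that is, $a$ precedes $c$, as required.

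The remaining two points are short. For membership in part $k$, the first $2k$ entries of $\pi$ are exactly $[2k]$ by construction, while for any $j<k$ having the first $2j$ entries equal to $[2j]$ would require $j+1$ letters $d$ and $j-1$ letters $u$ among $s_1\cdots s_{2j}$, i.e. $h_{2j}=-2$, contradicting the ballot inequality; so no smaller even prefix fills an initial segment and $\pi$ sits in part $k$. Finally, to exhibit a full order I would append $2k+1,2k+2,\dots,n$ in increasing order after $\pi$. Any triple meeting $\{2k+1,\dots,n\}$ has its largest element placed after all of $[2k]$ and has its elements above $2k$ in increasing order, so in every such triple the smallest element is never last and the largest is never first; thus both $1N3$ and $3N1$ hold regardless of the median's class. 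Combined with the first part, the resulting order lies in $D_X(A_n)$ and has top segment $\pi$, which completes the proof.
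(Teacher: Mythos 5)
Your proof is correct, but it is organized around a different key reduction than the paper's. The paper verifies the never conditions by classifying the eight positional letter patterns $ddd,\dots,uuu$ of a triple, noting that six are automatic from the within-class ascending order, and ruling out violations in the two problematic patterns ($ddu$ against $1N3$ and $duu$ against $3N1$) via counting inequalities that contradict the Dyck property. You instead classify triples by the classes of their two endpoints, show that three of the four cases follow from within-class monotonicity alone, and that the remaining case (smallest element in $\overline{A_{2k}}$, largest in $A_{2k}$) collapses, for either possible never condition, to the single pairwise statement that $a\in\overline{A_{2k}}$ precedes $c\in A_{2k}$ whenever $a<c$; you then prove that one statement by locating $a$ at the $p$-th letter $u$ and $c$ at the $(r+1)$-th letter $d$ and invoking the ballot inequality, with $p\le r-1$ coming from the interleaving of $A_{2k}=\{1,2,4,\dots,2k-2\}$ and $\overline{A_{2k}}=\{3,5,\dots,2k-1,2k\}$. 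Both arguments rest on the same two ingredients (monotonicity within classes plus the Dyck condition), but your pairwise lemma isolates the arithmetic in one clean place, whereas the paper's per-pattern inequalities are stated rather opaquely; your phrase ``every case in which two of the three elements lie in the same class is automatically satisfied'' is loose (every triple has two same-class elements), but the intended endpoint case split is correct. The part-$k$ membership argument (a prefix equal to $[2j]$ would force $h_{2j}=-2$) matches the paper's counting claim. Finally, you do something the paper omits: since the lemma asserts the decoded segment is the top of an actual order of $D_X(A_n)$, you explicitly complete it by appending $2k+1,\dots,n$ in ascending order and checking all triples meeting the tail, a step the paper leaves implicit; on that point your write-up is the more complete of the two.
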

\begin{proof}
Let us consider nonboundary subwords $a_ia_ja_l,$ $i,j,l\in[2k]$ (boundary elements have indices $0$ and $2k+1$). There are eight types such subwords: $ddd$, $ddu$, $dud$, $duu$, $udd$, $udu$, $uud$, $uuu$. The corresponding order should satisfy 1N3 if the median is $d$, and 3N1 if the median is $u$. Because the order of elements in $A_{2k}$ and in $\overline{A_{2k}}$ are increasing, images of $ddd$, $dud$, $udu$, $uuu$ satisfy 1N3, 3N1, image of $udd$ satisfies 1N3, image of $uud$ satisfies 3N1.

Image of $ddu$ violates 1N3, if
$$|\{r\in [2k]|a_r=u, r<l\}|+2<|\{r\in [2k]|a_r=d, r<i\}|.$$
It is impossible because of the definition of the Dyck word.

Image of $duu$ violates 3N1, if 
$$|\{r\in [2k]|a_r=u, r<j\}|+2<|\{r\in [2k]|a_r=d, r<l\}|.$$
It is impossible because of the definition of the Dyck word.

For each $r<k$ we have $|[2r]  \cap A_{2k}|=2+|[2r] \cap \overline{A_{2k}}|$. The image of the first 
$2r$ of each Dyck word contains at least one $u$ that corresponds with an element from set $[2n]\setminus[2k]$. The corresponding order belongs to part $k$. \end{proof}

\begin{lemma}
    The image of each top $2k$ elements segment from $k^{th}$ part of $D_X(A_n)$ is a Dyck word.
\end{lemma}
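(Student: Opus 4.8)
The plan is to prove the converse direction of the bijection, establishing that the map $\mu$ sends every admissible order segment to a genuine Dyck word. Given an order $q$ from the $k^{th}$ part of $D_X(A_n)$, I take its top $2k$ elements and apply $\mu$: read the segment left to right, recording $d$ for each element in $A_{2k}$ and $u$ for each element in $\overline{A_{2k}}$, then prepend a $u$ and append a $d$. The resulting word has length $2(k+1)$. The task is to verify two things: first, that this word has exactly $k+1$ letters $u$ and $k+1$ letters $d$, and second, that it satisfies the Dyck prefix condition, namely that in every prefix the count of $u$'s is at least the count of $d$'s.

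First I would establish the letter counts. An order from part $k$ starts from $[2k]$ but not from $[2(k-1)]$, so its top $2k$ elements are exactly the alternatives $\{1,2,\ldots,2k\}$. By the definition of $A_{2k}$, we have $|[2k]\cap A_{2k}|=k-1$ (the even numbers up to $2k-2$, together with $1$) and correspondingly $|[2k]\cap \overline{A_{2k}}|=k+1$; since $2k\notin A_{2k}$, the counts work out so that the segment contributes $k-1$ letters $d$ and $k+1$ letters $u$. After prepending $u$ and appending $d$ we obtain $k+1$ of each, matching the required length.

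The main obstacle is verifying the Dyck prefix inequality, and this is where I would invoke the structural lemmas already proved. The key observation is that for each $r<k$ we have $|[2r]\cap A_{2k}| = |[2r]\cap \overline{A_{2k}}| - 2$, reflecting that the even-set scheme always keeps the running count of $\overline{A_{2k}}$-elements ahead of the $A_{2k}$-elements by a margin of two on every initial block $[2r]$. Combined with Lemma~\ref{lem:2} and its companion, which force the elements of $A_{2k}$ and of $\overline{A_{2k}}$ each to appear in ascending order within $q$, I can control, for any prefix of the order segment, how many of its entries lie in $A_{2k}$ versus $\overline{A_{2k}}$. Concretely, if a prefix of the segment has already placed some number of $A_{2k}$-elements, the ascending-order constraints pin down which specific alternatives these are, and the counting identity above then guarantees that the number of $\overline{A_{2k}}$-elements seen so far cannot fall below it. Translating this back through $\mu$, together with the leading prepended $u$, gives that every prefix of the constructed word has at least as many $u$'s as $d$'s.

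I would present the prefix argument by considering an arbitrary cut point in the order segment and arguing that a violation of the Dyck condition at that point would force an alternative from $A_{2k}$ to precede too many of its $\overline{A_{2k}}$ counterparts, contradicting either the ascending-order lemmas or the fact that $q$ genuinely lies in part $k$ (rather than a lower part). The final step is to note that since $\mu$ is injective by construction and both directions now map into the correct target sets, $\mu$ is a well-defined bijection, so the number of order segments in part $k$ equals $C_{k+1}$.
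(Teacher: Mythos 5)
Your overall architecture is the same as the paper's: get balance from the letter counts, then establish the prefix condition by contradiction, showing that a violation would force the order to start from an initial segment $[r]$ with $r<2k$ and hence lie in a part lower than $k$. However, two of your concrete computations are wrong, and one of them is the crux of the argument. First, the letter counts. Under the paper's convention $A_n=\{1,2,4,6,\ldots\}$ (the element $1$ is adjoined precisely so that the example $1324\mapsto ududud$ makes sense, with $1$ mapping to $d$), we have $|[2k]\cap A_{2k}|=|[2k]\cap\overline{A_{2k}}|=k$, so the segment contributes $k$ letters $d$ and $k$ letters $u$, and prepending $u$ and appending $d$ gives $k+1$ of each. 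Your counts are internally inconsistent: the parenthetical ``the even numbers up to $2k-2$, together with $1$'' describes a set of size $k$, not $k-1$, and your claimed segment counts of $k-1$ letters $d$ and $k+1$ letters $u$ would give $k$ and $k+2$ after adding one of each, not $k+1$ of each.

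Second, and more seriously, your ``key observation'' is backwards. For $r<k$ the correct identity is $|[2r]\cap A_{2k}| = |[2r]\cap \overline{A_{2k}}| + 2$: on every initial block it is the $A_{2k}$-elements (the $d$'s) that lead by two, not the $\overline{A_{2k}}$-elements. Consequently your intermediate claim that ``the number of $\overline{A_{2k}}$-elements seen so far cannot fall below'' the number of $A_{2k}$-elements is false; the order $1324$ already violates it at its first entry. The correct identity, in its correct orientation, is exactly what makes the contradiction work: at the first prefix of the word where the $d$'s outnumber the $u$'s (counting the prepended $u$), the entries of the order seen so far contain exactly two more $A_{2k}$-elements than $\overline{A_{2k}}$-elements, say $p$ and $p-2$ of them; by the ascending-order lemmas these must be the $p$ smallest elements of $A_{2k}$, namely $1,2,4,\ldots,2(p-1)$, and the $p-2$ smallest elements of $\overline{A_{2k}}$, namely $3,5,\ldots,2p-3$, whose union is precisely $[2p-2]=[r]$. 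Hence the order starts from $[r]$ with $r\leq 2k-2$ and belongs to part $r/2$ or lower, contradicting membership in part $k$. With your reversed identity this deduction is unavailable, so as written your prefix argument does not go through, even though the contradiction you aim for is the right one.
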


\begin{proof}
Because $|A_{2k}|=|\overline{A_{2k}}|$ we have equal numbers of $u$ and $d$.

Suppose there is an order $q$ from $k^{th}$ part of $D_X(A_n)$ such that for corresponding Dyck work there is $0\leq j \leq 2k-1$ with $|\{i\in[j]\cup0|a_i=u\}|<|\{i\in[j]\cup0|a_i=d\}|$. Therefore there is $0\leq r \leq j$ such that $|\{i\in[r]\cup0|a_i=u\}|+1=|\{i\in[r]\cup0|a_i=d\}|$. Thus, order $q$ belongs to part $r/2$, or lower. We get a contradiction.
\end{proof}

We are ready to present our main result.

\begin{prop}
    For even $n$ we have $$a(n) \sim \frac{\sqrt2}{4}\left(\sqrt{2+2\sqrt{2}}\right)^n,$$ for odd $n$ we have $a(n) \sim \frac{\sqrt{\sqrt{2}-1}}{2}\left(\sqrt{2+2\sqrt{2}}\right)^n$.
\end{prop}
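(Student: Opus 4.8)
The plan is to reduce the statement to a one–variable coefficient–asymptotics problem: first derive a linear recurrence for $a(n)$ from the partition into parts together with the Dyck‑word bijection, then solve it with a generating function, locate its dominant singularity, and read off both the exponential base and the leading constant; the odd case will be deduced from the even one through Proposition~\ref{double}.

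First I would record the recurrence made available by the preceding lemmas. An order lies in part $k$ precisely when its first $2k$ entries are a permutation of $[2k]$ while its first $2(k-1)$ entries are not a permutation of $[2(k-1)]$. By Lemma~\ref{lem:2} and the companion lemma on $\overline{A_{2k}}$, the top $2k$‑segment of a part‑$k$ order is a shuffle of the increasing sequence on $A_{2k}$ with the increasing sequence on $\overline{A_{2k}}$, and the bijection $\mu$ identifies these segments with the Dyck words of length $2(k+1)$, of which there are $C_{k+1}$. The alternatives $\{2k+1,\dots,n\}$ lying below such a segment form, after relabelling, an even set‑alternating domain on $n-2k$ alternatives, and the triples that straddle the two blocks are automatically consistent with the never conditions, so each top segment combines freely with the $a(n-2k)$ orders below. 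Hence for even $n$,
\[
a(n)=\sum_{k=1}^{n/2} C_{k+1}\,a(n-2k),\qquad a(0)=1 ,
\]
which for $n=4,6,8$ returns $9,42,199$, matching Table~\ref{tab:size}.

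Next I would pass to generating functions. With $b(m)=a(2m)$ and $B(x)=\sum_{m\ge0}b(m)x^{m}$, the convolution gives $B(x)=1+A(x)B(x)$, where $A(x)=\sum_{k\ge1}C_{k+1}x^{k}$. Using the Catalan series $C(x)=\tfrac{1-\sqrt{1-4x}}{2x}$ one obtains $A(x)=\bigl(C(x)-1-x\bigr)/x$, whence
\[
B(x)=\frac{1}{1-A(x)}=\frac{2x^{2}}{\,4x^{2}+2x-1+\sqrt{1-4x}\,}.
\]
The growth of $b(m)$ is governed by the singularity of $B$ nearest the origin. Apart from the branch point of $\sqrt{1-4x}$ at $x=\tfrac14$, the denominator may vanish; setting $\sqrt{1-4x}=1-2x-4x^{2}$ and squaring reduces to $4x^{2}+4x-1=0$, whose only genuine (positive) root is
\[
x_{0}=\frac{\sqrt2-1}{2}=\frac{1}{2+2\sqrt2}<\frac14 .
\]
Thus $x_{0}$ lies strictly inside the disc where $\sqrt{1-4x}$ is analytic, the denominator $g(x)=4x^{2}+2x-1+\sqrt{1-4x}$ is analytic there with a simple zero, and $B$ has a simple pole at $x_{0}$, which is the dominant singularity.

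Finally I would extract the constant by a residue computation. One finds $g'(x_{0})=2\sqrt2-4$, so near $x_{0}$, $B(x)\sim \tfrac{\sqrt2}{4}\,(1-x/x_{0})^{-1}$; transfer of this simple pole yields $b(m)\sim \tfrac{\sqrt2}{4}\,x_{0}^{-m}$, that is $a(n)=b(n/2)\sim \tfrac{\sqrt2}{4}\bigl(\sqrt{2+2\sqrt2}\bigr)^{n}$ for even $n$. For odd $n$ I would invoke Proposition~\ref{double}: since $A_{n}=\{2,4,\dots,n-1\}$ has largest element $w=n-1$ and $A_{n}\setminus\{n-1\}=A_{n-1}$, it gives $a(n)=2a(n-1)$; substituting the even asymptotic and simplifying $\tfrac{\sqrt2}{2}\bigl(\sqrt{2+2\sqrt2}\bigr)^{-1}=\tfrac{\sqrt{\sqrt2-1}}{2}$ produces the stated odd constant. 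The main obstacle is the singularity analysis rather than the algebra: one must verify that the algebraic zero $x_{0}$ genuinely precedes the branch point $\tfrac14$ (so that the simple pole, not the square‑root point, controls the growth), confirm that the numerator $2x^{2}$ cancels the spurious double zero of $g$ at the origin, and check that after removing the polar part the remainder is analytic up to modulus $\tfrac14>x_{0}$ and hence of strictly smaller exponential order — this is what upgrades a $\Theta$ estimate to the clean asymptotic equivalence $\sim$ asserted in the proposition.
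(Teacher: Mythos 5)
Your proposal is correct, and its combinatorial half coincides with the paper's argument: the paper derives exactly the same recurrence $w(m)=\sum_{k=1}^{m}C_{k+1}w(m-k)$, $w(0)=1$, for $w(m)=a(2m)$ from the partition into parts and the Dyck-word bijection, and it handles odd $n$ exactly as you do, via $a(n)=2a(n-1)$ from Proposition~\ref{double}. The difference lies in how the recurrence is solved. The paper performs no singularity analysis of its own: it matches the recurrence with recurrence (32) of \cite{AW2010}, identifies $w(m)$ as sequence A289684 in \cite{sloane2003line}, and imports the asymptotic $w(m)\sim\frac{\sqrt{2}}{4}\left(2+2\sqrt{2}\right)^{m}$ credited there to Robert Israel. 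You instead prove this step. Your generating function $B(x)=\frac{2x^{2}}{4x^{2}+2x-1+\sqrt{1-4x}}$ is right, and the key numbers check out: the denominator's zero at $x_{0}=\frac{\sqrt{2}-1}{2}=\frac{1}{2+2\sqrt{2}}<\frac{1}{4}$ is genuine (at $x_{0}$ one has $1-2x_{0}-4x_{0}^{2}=2x_{0}>0$, so squaring introduced no spurious root there), $g'(x_{0})=2\sqrt{2}-4$, and the residue gives $\frac{-2x_{0}}{g'(x_{0})}=\frac{\sqrt{2}}{4}$; since the remainder after removing the polar part is analytic in $|x|<\frac{1}{4}$ and $4<2+2\sqrt{2}$, the pole dominates and yields a true $\sim$, and your simplification of the odd-$n$ constant to $\frac{\sqrt{\sqrt{2}-1}}{2}$ agrees with the paper's equivalent form $\frac{1}{2\sqrt{1+\sqrt{2}}}$. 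What your route buys is self-containment and verifiability -- the constant is proved rather than cited from an informal OEIS entry; what the paper's route buys is brevity and an explicit link to the queueing-theory origin of the sequence. One detail worth making explicit in a full write-up, which both you and the paper pass over quickly, is that the count $C_{k+1}a(n-2k)$ for part $k$ requires top segments to combine freely with tails, i.e.\ straddling triples impose no constraint; this holds because in a part-$k$ order every alternative $l>2k$ lies below all of $[2k]$, so in any straddling triple the axis-smallest element is not last and the axis-largest is not first, which satisfies both $1N3$ and $3N1$ no matter which is imposed.
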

\begin{proof}
For $m=n/2$, even $n$ we define 
$w(m)=a(n)$. From bijection we have
\begin{equation}
    w(m)=\sum_{k=1}^m C_{k+1}w(m-k).
    \label{Catalan}
\end{equation}

Starting from $w(0)=1$ we have a recurrence over even numbers. After rearranging indices one  finds that this recurrence is  recurrence (32) from \cite{AW2010}.

The sequence $w(m)$ is sequence A289684 in  \cite{sloane2003line}. Robert Israel showed that $w(m)\sim \frac{\sqrt2}{4}\left(2+2\sqrt{2}\right)^m$. Having $m=n/2$ we get the result for even $n$. From proposition~\ref{double} for odd $n$ we have $a(n)=2a(n-1)$ that means $a(n) \sim \frac{1}{2\sqrt{1+\sqrt{2}}}\left(\sqrt{2+2\sqrt{2}}\right)^n$. \end{proof}

For even $n$, $a(n)$ is member $n/2$ of the  sequence A289684 in \cite{sloane2003line}. For $n=1,2,\dots$, we have $1,2,4,9,18,42,84,199,...$.

Because 1N33N1 alternating scheme produces a peak-pit domain we have $f(n),h(n)\in \Omega(\left(\sqrt{2+2\sqrt{2}}\right)^n)$, where $\sqrt{2+2\sqrt{2}}=2.197368...$. 

Equation~\ref{Catalan} has coefficients which are Catalan numbers. Catalan numbers have many interesting properties (see e.g. \cite{Stanley}). Proposition~\ref{odd} exploits one of the Catalan numbers properties.

\begin{prop}
    $a(n)$ is odd if and only if $n=2^t$, $t>1$, or $t=0$.
\label{odd}
\end{prop}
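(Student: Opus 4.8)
The plan is to reduce the whole question modulo $2$, where the parity pattern of the Catalan numbers interacts very cleanly with the recurrence, and to finish with a short generating-function identity over $\mathbb{F}_2$.

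First I would dispose of the odd values of $n$. For odd $n$ Proposition~\ref{double} gives $a(n)=2a(n-1)$, which is even for every $n\geq 3$, while $a(1)=1$ is odd; since $1=2^0$, this is precisely the $t=0$ instance of the statement, and no other odd $n$ yields an odd value. So everything reduces to even $n$. Writing $m=n/2$ and $w(m)=a(n)$, I would work from the recurrence $w(m)=\sum_{k=1}^{m}C_{k+1}w(m-k)$ with $w(0)=1$ coming from the Dyck-word bijection, i.e.\ \eqref{Catalan}. Encoding it in the generating function $W(x)=\sum_{m\geq 0}w(m)x^m$ yields $W(x)=\bigl(1-\hat C(x)\bigr)^{-1}$, where $\hat C(x)=\sum_{k\geq 1}C_{k+1}x^{k}$. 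Now I reduce modulo $2$ and invoke the classical parity fact (via Lucas/Kummer, see \cite{Stanley}) that $C_j$ is odd exactly when $j=2^{s}-1$, so $C_{k+1}$ is odd iff $k+2$ is a power of $2$. Over $\mathbb{F}_2$ this collapses $P(x):=1+\hat C(x)$ to $P(x)=1+\sum_{s\geq 2}x^{2^{s}-2}=1+x^{2}+x^{6}+x^{14}+\cdots$, and the reduction of $W$ becomes $B(x)=1/P(x)$ (recall $1-\hat C=1+\hat C=P$ in characteristic $2$).

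The heart of the argument, and the step I expect to be the main obstacle, is to show $B(x)=Q(x):=1+\sum_{s\geq 1}x^{2^{s}}=1+x^{2}+x^{4}+x^{8}+\cdots$, equivalently $P(x)Q(x)=1$ over $\mathbb{F}_2$. I would prove this purely algebraically in two moves. First, multiplying $P$ by $x^{2}$ telescopes each exponent $2^{s}-2$ into $2^{s}$, giving $x^{2}P(x)=\sum_{s\geq 1}x^{2^{s}}=1+Q(x)$, hence $Q=1+x^{2}P$. Second, using the Frobenius identity $P(x)^{2}=P(x^{2})$ valid over $\mathbb{F}_2$, a parallel reindexing of the exponents $2(2^{s}-2)=2^{s+1}-4$ (then multiplied by $x^{2}$) shows $x^{2}P(x)^{2}=1+P(x)$. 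Combining the two relations, $P Q=P(1+x^{2}P)=P+x^{2}P^{2}=P+(1+P)=1$, as required.

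Consequently $B(x)=Q(x)$, so for $m\geq 1$ the value $w(m)$ is odd exactly when $m\in\{2^{s}:s\geq 1\}$ (the isolated $m=0$ term corresponds to $n=0$ and lies outside the range). Translating back through $n=2m$, the even $n$ with $a(n)$ odd are precisely $n=2^{s+1}$ with $s\geq 1$, that is $n=2^{t}$ with $t\geq 2$. Together with the odd-case analysis this gives odd $a(n)$ exactly on $\{2^{0}\}\cup\{2^{t}:t>1\}$, which is the claim. The only genuinely delicate point is the $\mathbb{F}_2$ identity $PQ=1$; everything else is bookkeeping once the Catalan parity pattern is in hand.
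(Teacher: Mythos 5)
Your proof is correct, and it takes a genuinely different route from the paper's at the decisive step. Both arguments share the same skeleton: odd $n$ is disposed of via Proposition~\ref{double}, and even $n$ is handled through recurrence~\eqref{Catalan} combined with the fact that $C_j$ is odd exactly when $j=2^s-1$ (the paper cites \citet{Egecioglu} for this; your Lucas/Kummer reference does the same job). The difference is what happens next. The paper runs an induction on $n$: assuming the parity pattern up to $2^t$, it identifies which summands $C_{k+1}w(m-k)$ are odd and then counts representations $m+2=2^s+2^r$ as a sum of two powers of two, showing that these come in pairs (contributing an even number of odd terms) except precisely when one part is forced to equal $2$, i.e.\ when $m=2^r$, giving $n=2^{r+1}$. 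You instead compute the parity generating function in closed form over $\mathbb{F}_2$: $W\equiv 1/P$ with $P=1+x^2+x^6+x^{14}+\cdots$, and the key identity $PQ=1$ with $Q=1+\sum_{s\geq 1}x^{2^s}$ follows from the two telescoping relations $Q=1+x^2P$ and $x^2P(x)^2=x^2P(x^2)=1+P(x)$; I verified both reindexings and they hold, and since $P$ has constant term $1$ its inverse in $\mathbb{F}_2[[x]]$ is unique, so $W\equiv Q$ and the conclusion follows. Your approach buys a non-inductive, one-stroke determination of the entire parity sequence, replacing the paper's delicate casework on binary representations of $m+2$ (the tersest part of its proof) with two one-line identities and the Frobenius endomorphism. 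The paper's approach buys elementarity --- no formal power series or characteristic-$2$ algebra, just counting --- and makes explicit the combinatorial mechanism by which powers of two arise. Both are complete proofs of the proposition.
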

 
\begin{proof}
We will prove this by induction. The statement is true for $n\leq4$. Suppose the statement is true for $n\leq2^t$. Let us prove for $n\in\{2^t+1,\ldots, 2^{t+1}\}$.

If $n$ is odd, then by Proposition~\ref{double} we have $a(n)=2a(n-1)$, $a(n)$ is even. For even $n$ we have equation~\ref{Catalan}, where $m=n/2$. 

$C_{k+1}$ is odd if and only if $k+1=2^r-1$, $r>1$ \citep{Egecioglu}. $w(m-k)$ is odd if and only if $m-k=0$, or $m-k=2^s$, $s>0$ (it follows from from induction hypothesis). 

For $k=m$, both numbers are odd if $m=2^r-2$.

For $k=m-2^s$, both numbers are odd if $m=2^s+2^r-2$.

If we have $m=2^{r_1}-2$, then there is exactly one pair $r_2,s$ such that $2^s+2^{r_2}-2=2^{r_1}-2$ (solution is $s=r_2=r_1-1$). Thus for all $m=2^r-2$, $m>2$ we have two odd members in equation~\ref{Catalan}, the sum is even.

The expression $m=2^{s}+2^{r}-2$ is an equation with unknowns $s$ and $r$. The number $m+2$ has a unique partition into two powers of two, if it exists. If the parts are equal numbers, then $m=2^{r_1}-2$ for some $r_1$. In this case we have an even sum. If the parts are different and $s,r>1$, then there are two ways to define $s,r$, and the sum is even. If $s=1$ and $r>1$, then there is unique $k=m-2$ that leads to odd member of the sum. There is no other pair of $s,t$, because $r>1$. Thus, we have $m=2+2^r-2$ and the sum is odd. Having $r=t+1$ we obtain the result.
\end{proof}

The domain given by the even 1N33N1 alternating scheme exceeds the domains from Fishburn's alternating scheme in size for all $n\geq 18$, but as we have already seen it is not the largest set-alternating domain. The odd 1N33N1 alternating scheme leads to larger Condorcet domains. However, using Proposition~\ref{general} we can show that the ratio of their sizes is bounded by a constant factor. In particular the growth rate is of the same order as $a(n)$.
\begin{prop}
   The odd 1N33N1 alternating scheme generates domains of size  $\Omega(\left(\sqrt{2+2\sqrt{2}}\right)^n)$
   \label{prop:newA}
  \end{prop}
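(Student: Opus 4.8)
The plan is to bound the size of the odd scheme below by the size $a(n)$ of the even scheme, whose growth rate $\gamma:=\sqrt{2+2\sqrt{2}}$ was just determined, and then invoke the asymptotic estimate $a(m)\sim c\,\gamma^m$ (with $c>0$ depending on the parity of $m$) from the preceding proposition. Writing $b(n):=f_n(B_n)$ for the size of the odd $1N33N1$-alternating domain, I would treat the two parities separately, the whole argument hinging on two elementary set identities together with Lemma~\ref{compl}, Corollary~\ref{incl2}, and Corollary~\ref{lowb}.

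First, for odd $n$ I would use the observation, already made in the text preceding the definition of $A_n$, that the reverse complement of $B_n=\{2,3,5,\ldots,n-2\}$ is, up to the irrelevant endpoints $1$ and $n$, the set of even numbers $B_n^*=\{2,4,\ldots,n-3\}$, and that $A_n=\{2,4,\ldots,n-1\}=B_n^*\cup\{n-1\}$. Since $n-1\in A_n$, Corollary~\ref{incl2} gives $f_n(B_n^*)=f_n(A_n\setminus\{n-1\})\geq f_n(A_n)=a(n)$, while Lemma~\ref{compl} gives $f_n(B_n)=f_n(B_n^*)$. Hence $b(n)\geq a(n)$ for odd $n$.

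For even $n$ I would use the identity $B_n=B_{n-1}$ as sets of integers: for even $n$ both equal $\{2,3,5,\ldots,n-3\}$ (here $n-1$ is odd, so $B_{n-1}$ has largest element $(n-1)-2=n-3$). Consequently $f_{n-1}(B_n)$ is exactly the odd scheme on the $n-1$ alternatives $[n-1]$, i.e.\ $f_{n-1}(B_n)=b(n-1)$. Corollary~\ref{lowb}, which rests on Proposition~\ref{general}, then yields $b(n)=f_n(B_n)\geq 2f_{n-1}(B_n)=2b(n-1)$, and since $n-1$ is odd the previous paragraph gives $b(n-1)\geq a(n-1)$, so $b(n)\geq 2a(n-1)$.

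Combining these inequalities with $a(m)\sim c\,\gamma^m$ finishes the argument: for odd $n$, $b(n)\geq a(n)=\Omega(\gamma^n)$; for even $n$, $b(n)\geq 2a(n-1)=\Omega(\gamma^{n-1})=\Omega(\gamma^n)$, the factor $2/\gamma$ being an absolute positive constant. I expect the only real care needed to lie in verifying the two set identities and in checking that the hypotheses of the cited results hold — in particular that the largest element $w=n-3$ of $B_n$ satisfies $1<w<n-1$, so that Proposition~\ref{general} (hence Corollary~\ref{lowb}) is available. These conditions hold for all sufficiently large $n$, which is all that is needed for an $\Omega$ bound, and no genuinely difficult step arises once the comparison to $a(n)$ has been set up correctly. (The same two identities, read together with the recurrence of Proposition~\ref{general}, also give a matching upper bound $b(n)=O(a(n))$, confirming the remark that the ratio of the two sizes is bounded by a constant, though this is not required for the stated $\Omega$ result.)
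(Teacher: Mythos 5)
Your proof is correct, and it establishes the bound by a genuinely different route from the paper's. The paper argues via restrictions of domains to subsets of the alternatives: deleting alternatives $2$ and $n-1$ (resp.\ just $2$) from the odd-scheme domain leaves a copy of the even scheme on $n-2$ (resp.\ $n-1$) alternatives, while deleting $3$ (resp.\ $3$ and $n-2$) from $D_X(A_n)$ leaves a copy of the odd scheme on fewer alternatives; this sandwiches the odd scheme's size between $a(n-2)$ and $a(n+2)$, and the asymptotics of $a(n)$ finish the argument. You instead stay entirely inside the recursive machinery of Section~\ref{sec:set_alter}: for odd $n$ the identity $A_n=B_n^*\cup\{n-1\}$ together with Lemma~\ref{compl} and Corollary~\ref{incl2} gives $f_n(B_n)=f_n(B_n^*)\geq f_n(A_n)=a(n)$, and for even $n$ the identity $B_n=B_{n-1}$ together with Corollary~\ref{lowb} gives $f_n(B_n)\geq 2f_{n-1}(B_{n-1})\geq 2a(n-1)$; both set identities check out against the definitions of $A_n$, $B_n$ and the reverse complement. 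Your route has two advantages: it relies only on statements the paper proves in full (the paper's claims that the various restrictions \emph{contain} $a(n-2)$, $f_{n-1}(\{2,3,5,\ldots,n-3\})$, etc.\ orders require an extension argument --- every order of the smaller scheme must extend to an order of the larger domain --- which is asserted without justification), and it yields slightly sharper lower bounds, namely $f_n(B_n)\geq a(n)$ for odd $n$ and $f_n(B_n)\geq 2a(n-1)=4a(n-2)$ for even $n$, versus the paper's $a(n-2)$. What the paper's restriction argument buys in exchange is the matching upper bound $f_n(B_n)\leq a(n+2)=O(a(n))$, which is what substantiates the constant-ratio remark in the sentence preceding the proposition; your closing parenthetical only sketches that direction, but it is not needed for the stated $\Omega$ result. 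Your attention to the hypothesis $1<w<n-1$ of Proposition~\ref{general} is well placed: since the largest element of $B_n$ is $n-3$, everything you invoke is valid for all $n\geq 6$, which suffices for an asymptotic bound.
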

  \begin{proof}
      The restriction of $D_X(\{2,3,5,7,\dots,n-3\}$ to the set $X\setminus\{2,n-1\}$ contains $a(n-2)$ orders. The restriction of $D_X(\{2,3,5,7,..,n-2\}$ to the set $X\setminus\{2\}$ contains a(n-1) orders.

      The restriction of $D_x(A_n)$ to the set 
      $X\setminus\{3,n-2\}$ contains $f_{n-2}(\{2,3,5,7,\dots,n-5\})$ orders.  The restriction of $D_x(A_n)$ to the set $X\setminus\{3\}$ contains $f_{n-1}(\{2,3,5,7,\dots,n-3\})$ orders.

      Thus, $f_n(\{2,3,5,7,\dots,n-3\}),f_n(\{2,3,5,7,\dots,n-2\})$ are between $a(n-2)$ and $a(n+2)$ that proves the statement.
  \end{proof}


\section{Discussion}
\label{sec:disc}

In this paper we introduced and studied the class of bipartite peak-pit domains.  This class provides a common generalisation of single-peaked and single-dipped domains, based on a partition of the alternatives into two parts. The voters rank alternatives in one part with single-peaked preferences, and in the other part with single-dipped preferences, on a common societal axis. This captures situations where a voter has a most preferred location on the axis for alternatives in the first part and a least preferred location for alternatives in the second part, thereby mixing two different rationales for the ranking.
We have found that for small numbers of alternatives $n$ most peak-pit domains are bipartite, and for each $n\leq 8$ the largest peak-pit domain is bipartite. 

Next we considered the midpoint bipartite peak-pit domains. This is a subclass of the bipartite domains with additional restrictions on the never conditions for triples which contain alternatives from both of the two parts.  This class is significantly smaller than the full set of bipartite peak-pit domains, but conjecturally they are always at least as large as single-peaked domains and up to $n=7$ the largest peak-pit domains for each $n$ belong to this class. The latter is due to the fact that Fishburn's domains are midpoint bipartite for all $n$. 

Fishburn's alternating scheme has long been one of the cornerstones for constructing examples of large Condorcet domains. As we have shown, the family of set-alternating schemes can produce even larger domains, taking over from Fishburn's alternating scheme at $n=16$, just  like the examples from Fishburn's replacement scheme did. 

Our paper leaves open conjectures and questions for each of the domain classes we have studied. We believe that the bipartite  peak-pit domains are of interest both for direct applications, where they model preferences with mixed rationales, and for the more mathematical study of large Condorcet domains.

\section*{Acknowledgements}
The authors would like to thank Arkadii Slinko for his useful comments on the manuscript. The work on set-alternating schemes was motivated by some unpublished research notes by Dr Riis. Bei Zhou was funded by the Chinese Scholarship Council (CSC). The Basic Research Program of the National Research University Higher School of Economics partially supported Alexander Karpov. This research utilised Queen Mary's Apocrita HPC facility, supported by QMUL Research-IT.

\end{document}